\documentclass{article}

\usepackage[utf8]{inputenc}
\usepackage{dsfont}
\usepackage{amsmath}
\usepackage{amssymb}
\usepackage{amsthm}
\usepackage{graphicx}
\usepackage{easyReview}
\usepackage{url}
\usepackage{subfig}
\usepackage[sort]{cite}
\usepackage{multicol}
\usepackage{bm}

\renewcommand{\vec}{\mathbf}
\newcommand{\set}[1]{\left\{#1\right\}}

\newcommand{\abs}[1]{\left|#1\right|}
\newcommand{\norm}[1]{\left\|#1\right\|}

\newcommand{\R}{\mathds{R}}

\newcommand{\rank}{\mathop{\text{rank}}}

\newcommand{\eps}{\varepsilon}
\renewcommand{\d}{\tilde{d}}

\renewcommand{\vec}{\mathbf}

\newtheorem{remark}{Remark}
\newtheorem{theorem}{Theorem}
\newtheorem{definition}{Definition}
\newtheorem{lemma}{Lemma}
\newtheorem{claim}{Claim}

\begin{document}

\title{Metricizing the Euclidean Space towards Desired Distance Relations in Point Clouds}
\author{Stefan Rass$^{1,2}$, Sandra K\"onig$^3$, Shahzad Ahmad$^1$, Maksim Goman$^1$
\date{\small{$^1$ LIT Secure and Correct Systems Lab, Johannes Kepler University, Linz, Austria, email: stefan.rass@jku.at, shahzad.ahmad@jku.at, maksim.goman@jku.at\\$^2$ Institute for Artificial Intelligence and Cybersecurity, Universitaet Klagenfurt, Klagenfurt, Austria, email: stefan.rass@aau.at\\$^3$ Austrian Institute of Technology, Center for Digital Safety \& Security, Vienna, Austria, email: sandra.koenig@ait.ac.at}}}
\maketitle

\begin{abstract}
	Given a set of points in the Euclidean space $\R^\ell$ with $\ell>1$, the pairwise distances between the points are determined by their spatial location and the metric $d$ that we endow $\R^\ell$ with. Hence, the distance $d(\vec x,\vec y)=\delta$ between two points is fixed by the choice of $\vec x$ and $\vec y$ and $d$. We study the related problem of fixing the value $\delta$, and the points $\vec x,\vec y$, and ask if there is a topological metric $d$ that computes the desired distance $\delta$. We demonstrate this problem to be solvable by constructing a metric to simultaneously give desired pairwise distances between up to $O(\sqrt\ell)$ many points in $\R^\ell$. In particular, these distances can be chosen independent of any ``natural'' distance between the given points, such as Euclidean or others. Towards dropping the limit on how many points (at fixed locations) we can put into desired distance from one another, we then introduce the notion of an \emph{$\eps$-semimetric} $\tilde{d}$. This function has all properties of a metric, but allows violations of the triangle inequality up to an additive error $<\eps$. With this (mild) generalization of a topological metric, we formulate our main result: for all $\eps>0$, for all $m\geq 1$, for any choice of $m$ points $\vec y_1,\ldots,\vec y_m\in\R^\ell$, and all chosen sets of values $\set{\delta_{ij}\geq 0: 1\leq i<j\leq m}$, there exists an $\eps$-semimetric $\tilde{\delta}:\R^\ell\times \R^\ell\to\R$ such that $\tilde{d}(\vec y_i,\vec y_j)=\delta_{ij}$, i.e., the desired distances are accomplished, irrespectively of the topology that the Euclidean or other norms would induce. The order of quantifiers is important here: we first can choose the accuracy $\eps$ by which our semi-metric may be violate the triangle inequality (while leaving the other metric axioms to hold as usual for $\tilde{d}$), then fix the spatial locations of points, and after that step, choose the distances that we wish between our points. We showcase our results by using them to ``attack'' unsupervised learning algorithms, specifically $k$-Means and density-based (DBSCAN) clustering algorithms. These have manifold applications in artificial intelligence, and letting them run with externally provided distance measures constructed in the way as shown here, can make clustering algorithms produce results that are pre-determined and hence malleable. This demonstrates that the results of clustering algorithms may not generally be trustworthy, unless there is a standardized and fixed prescription to use a specific distance function.
\end{abstract}

\section{Introduction}
This work is motivated by the quesion of whether it is possible to manipulate clustering algorithms in artificial intelligence. Specifically, if we do not want to touch the data (e.g., by adding noise), and also do not want to change the algorithm away from trustworthy established standard techniques, can we trick a standard (unmodified) clustering algorithm into giving us results that we like for input data that is fixed? A possible positive answer is obtained if we let the algorithm run with a modified distance function provided externally, but if we want such a setup to ``appear plausible'', the distance function should not be just arbitrary, but at least satisfy the same properties as a metric in the topological sense. This leads to the challenge stated in the abstract, asking to find a metric that puts a given fixed set of points into distances that are arbitrary and up to a free choice (within obvious constraints, like there being no negative distances, for example). Before going into details about the construction of such metrics, let us briefly review their use for clustering to further substantiate the goals of this work.

Clustering algorithms, in the simplest instance, classify an input $\vec x$ by looking for the closest point $\vec y$, according to some distance measure, whose category ($class$) is known. If $\vec y$ has category $class(\vec y)$, then $\vec x$ is assigned the same category $class(\vec y)$. For unsupervised learning, the process works similar by assigning two points in close proximity the ``same'' class label, and giving other or introducing new class labels for points at some minimum distance or farther. Most clustering algorithms follow this basic approach, and distinguish themselves primarily in deeper details, such as whether or not cluster centers or the number of clusters needs to be known in advance (e.g., $k$-Means) or not (e.g., DBSCAN), how cluster centroids are determined based on data directly, the data distribution, or the data range (e.g., such as in grid-based techniques), or whether clusters are formed from points that are nearby one another, or near some fixed location. Essentially, it all boils down to a measuring ``proximity'' of points to other points. \emph{Metrics} are natural measures of distance, being positive definite, obeying the triangle inequality, and symmetric; with the latter meaning that the distance $d(\vec x,\vec y)$ between two points $\vec x,\vec y$ is the same as the distance $d(\vec y,\vec x)$. Dropping the symmetry requirement turns $d$ into a \emph{quasimetric} (although the name may vary in the literature). Even more generally, giving up on the triangle inequality, we get towards a \emph{premetric}, sometimes also just called a \emph{dissimilarity measure}. It should be noted that also dropping the positive definiteness would take us to non-Hausdorff spaces, in which two points $\vec x\neq \vec y$ can have zero distance. This case is where we draw the line as being not immediately interesting, since a clustering algorithm could in that case not distinguish two distinct points, based on their ``measured distance'', which would be awkward for practical clustering applications.

The question studied in this work is the following: 
\begin{quote}
	Let a set of points in the Euclidean space $\R^{\ell}$ be given and be fixed. Can we endow $\R^{\ell}$ with a metric that puts the given point cloud into pairwise distances of our free choice? 
\end{quote}
In other words, can we change local neighborhoods as induced by, say, the Euclidean distance, entirely by changing the metric? For example, suppose that three points $\vec x,\vec y,\vec z\in\R^{\ell}$ are given, where $\vec x$ and $\vec y$ are very close to one another, and $\vec z$ is far distant from both, $\vec x$ and $\vec y$, all according to the Euclidean metric. Can we ``change'' the metric to some other function that may put $\vec x$ and $\vec z$ close to one another, and instead has $\vec y$ come to lie far distant from $\vec x$ and $\vec z$? Or, alternatively, can we construct another metric that puts all three $\vec x,\vec y,\vec z$ into large distance from one another (thus, enlarging the close distance from $\vec x$ to $\vec y$, but leaving the distant $\vec z$ as being remote)?

This question, as formulated, admits little hope for a positive answer, considering that metrics induced by norms are all equivalent on $\R^{\ell}$, meaning that whatever norm we endow $\R^{\ell}$ with, its induced metric will, up to positive multiplicative constants, retain closeness and large distances, as implied by the Euclidean metric.

However, it turns out that -- up to an arbitrary little relaxation of the triangle inequality -- it is indeed possible to design a distance function that is ``almost'' a metric (only different in an arbitrary small error in the triangle inequality that we can even make lower than the machine accuracy), which computes any desired set of pairwise distances between any given set of points at fixed locations in $\R^{\ell}$. Using this distance function in replacement of the Euclidean or other distance, we will show how to manipulate two standard clustering algorithms towards giving \emph{any} result that we desire.

\subsection{Showcase: Security of Clustering Algorithms}
Throughout the rest of this work, let $Y\subset\R^{\ell}$ be a finite given and fixed point cloud. Assuming that we can metricize the space $\R^{\ell}$ in a certain fashion, we can then let a clustering algorithm run on the point cloud $Y$ to assign points to distinct clusters, based on their (mutual geometric) distances. The Euclidean norm, or the Manhattan norm are two typical choices for distance functions, but conceptually, clustering algorithms may work with any distance metric.

This opens an interesting possibility to manipulate an automated clustering, since if we can place a set of points into $\R^{\ell}$ to act as our cluster centers, such that the cluster centers are close or far away from each other, but will have their neighborhoods assigned to the provided cluster center, we can drive the clustering algorithm into producing a labeling that we can choose beforehand.

Figure \ref{fig:metric-change-example} illustrates the basic idea of this attack: consider the point cloud as drawn in the plane, with the points $\vec z_1, \vec z_2, \vec z_3$ being placed as cluster centers into the area. Now, suppose that we can design a metric that puts $\vec z_1$ and $\vec z_2$ close to one another, but leaves $\vec z_3$ far from both cluster centers. The aforementioned effect of a cluster center ``attracting'' points from its proximity to have the class label of it assigend to neighboring points, then would put the surrounding points $\vec y_1,\ldots,\vec y_{10}$ into the neighborhoods of $\vec z_1,\vec z_2$ and $\vec z_3$, although the Euclidean distances draw an entirely different picture (Figure \ref{fig:metric-change-example-left}). The question of this work is can we design the metric such that it produces a ``clustering picture'' that we like (e.g., such as Figure \ref{fig:metric-change-example-right}).

\begin{figure}
	\subfloat[Proximity according to Euclidean distance]{\label{fig:metric-change-example-left}
		\includegraphics[width=0.45\textwidth]{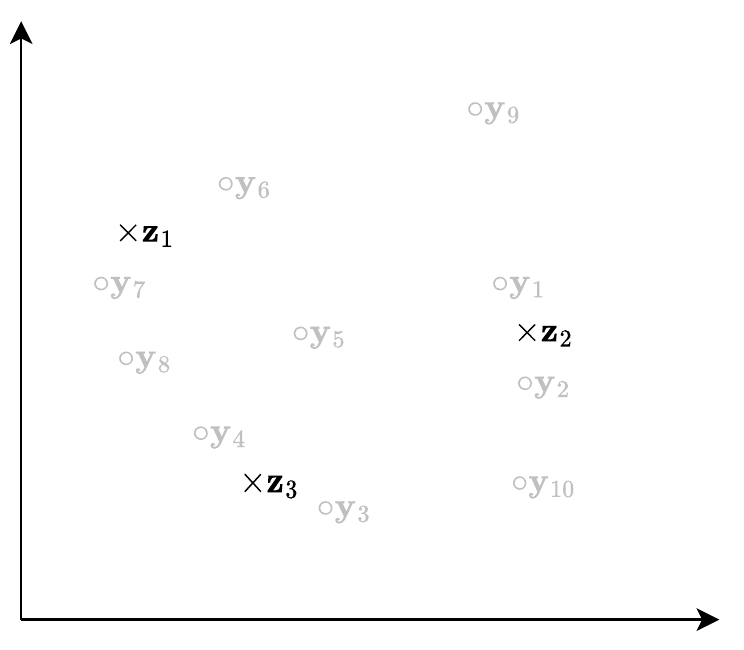}}\quad 
	\subfloat[Proximity according to some other metric]{\label{fig:metric-change-example-right}
		\includegraphics[width=0.45\textwidth]{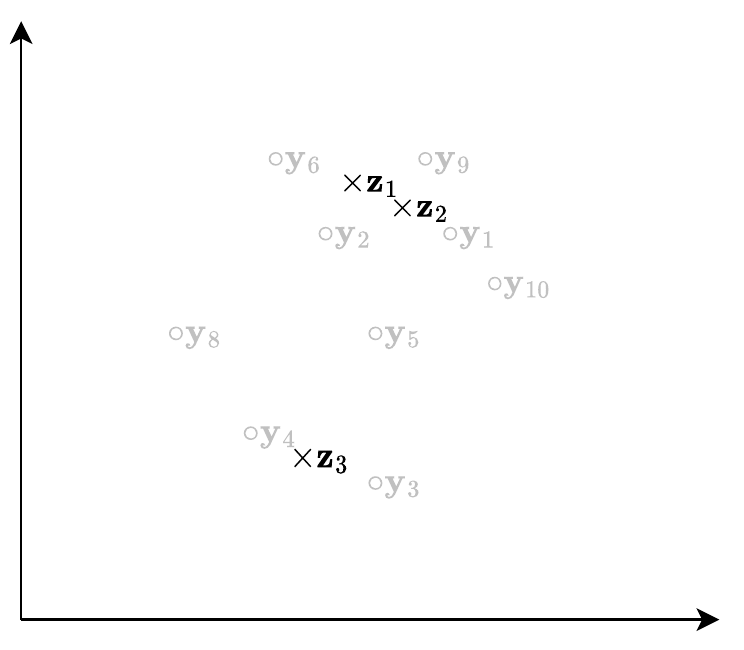}}
	\caption{Different neighborhoods depending on the underlying metric}\label{fig:metric-change-example}
\end{figure}

\subsection{Our Contribution}

We stress that the picture drawn by Figure \ref{fig:metric-change-example} is not meant to be topologically precise, nor would our results admit putting three points into the 2-dimensional space (this will be precluded for theoretical reasons in our construction), but the figure should help conveying the message about the possibility that we \emph{can} certify: 
\begin{quote}
	It is indeed possible to design metrics so that a certain number of points, irrespectively of where they are in the Euclidean space, come to lie at distances that can be chosen freely, up to satisfying the properties of a distance (symmetry and positivity for distinct points), but not necessarily the triangle inequality. 
\end{quote}

More specifically, our main results will be twofold:
\begin{itemize}
	\item We show that within $\R^{\ell}$, we can place $O(\sqrt{\ell})$ many points anywhere, and define pairwise distances between them, irrespectively of their Euclidean distance. Then, we can design a metric that puts these points exactly at distances that we have chosen before. This metric, which is a full metric in the topological sense, including the triangle inequality to hold (without error), is constructed by Theorem \ref{thm:relative-anzahl-absoluter-abstaende}.
	\item Generalizing the previous result, we then showing how we can place a larger (arbitrary) set $Z$ of points such that (i) their mutual distances, relative to each other, are as we desire, and (ii) the given data points are always close to points in $Z$. This will be Theorem \ref{thm:embedding-any-number-of-points}.
\end{itemize}

Practically, the implications of the ability to manipulate a clustering algorithm towards arbitrary results, are wide-reaching, since among many others, consider the following possibilities opened hereby: 

%\stefan{Further use-cases (contributed by Karin, but more is of course welcome):}
\begin{itemize}
	\item Unnecessary surgery, done to earn money and later claimed to be legitimate by relying on some AI.
	\item Attacking military targets according to one's own free will, but later claiming to have an AI providing the classification as the target to be worth bombing.
	\item Assign exam grades to students randomly or as we please, and fit them to a recognized scale afterwards, using AI to explain why so and pretend objectiveness hereby.
	\item Seemingly ``unbiased'' recruitment system that use AI to minimize the involvement of possibly biased humans in the selection process. Irrespectively of any debiasing that was done to the training data or the clustering algorithm itself, the possibility of changing the metric accordingly to manipulate the outcome can render all previous debiasing precautions void. The company can still hire ``as they wish'' and apparently legitimate their choice as objective.
\item Prioritization of patients based on their health conditions with help of an AI. Certain conditions (also non-medical ones) may be given priority no matter how different or similar two patients may be in medical aspects of treatment or urgency to get treated (for example, the patient with better insurance can be preferred).
\item and possibly many more, all relating to the question of whether AI can be trustworthy. Our answer is ``yes, if we a priori commit to what we do in a transparent and verifiable way''.
\end{itemize}

The contribution of this work is thus showing that independent of the data, including biases, noise or other features therein, the clustering algorithm itself is manipulable, if we admit too many degrees of freedom in the choice of the (topological) metric to run it. Thus, making AI objective and hence trustworthy requires a detailed commitment and openness not only about the algorithms themselves, but also about their configuration.

\subsection{Preliminaries}

\subsubsection*{Symbols and Notation}
The notation so far, and to be continued, lets vectors appear as lower-case bold printed letters, while scalars variables and functions are lower-case Latin letters. Greek letters are used for constants, while admitting that some constants may depend on other constants (but are hence themselves not variable). Upper case letters denote sets, and when bold-printed, denote matrices. Finally, calligraphic letters denote probability distributions. We hereafter call the set $\R$ or the higher-dimensional vector space $\R^{\ell}$ the Euclidean space (for short), \emph{not} implying herewith that it is endowed with the Euclidean metric.

We let the symbol $B_{2,\eps}=\set{\vec z\in\R^n~:~\norm{\vec z-\vec y}_2<\eps}$ be the Euclidean neighborhood of a point $\vec y\in\R^n$, being the open ball of radius $\eps$. More generally, for a norm induced by a positive definite quadratic form $q:\R^n\to\R$, we will write $B_{q,\eps}$ for the radius $\eps$-ball, w.r.t. the norm $\norm{\vec x}_q:=\sqrt{q(\vec x)}$. Since later, we will make use of random points chosen uniformly from a neighborhood, we will let the symbol $\mathcal{U}(B)$ to denote the continuous uniform distribution supported on a set $B$ of the Euclidean vector space.

%\begin{remark}[Double vs. single indexing]
In the following, we will make frequent use of distances $\delta_{ij}$ between two points $\vec y_i, \vec y_j$. We will write the distance $\delta_{ij}$ synonymously as $\delta_k$, with the convention that the integer $k$ one-to-one corresponds to the pair $ij$, written as $k\simeq ij$ in a slight abuse of notation. The reason is that we will several times need to denote an arbitrary pair or enumerate all pairs, and use a ``single'' index $k$ for this purpose, while at other times, we may speak about a specific pair, in which case we use the (equivalent) double index $ij$.
%\end{remark}

\subsubsection*{Definitions}
As the introduction already outlined, we will hereafter make use of different properties for a distance measure, and therefore, give a definition of a metric together with a tabular overview of generalizations. We stress that the terms used here are not all standardized in the literature, and some authors may prefer slight variations. 

\begin{definition}[Metric, ($\eps$-)Semimetric, and Generalizations]
Let $V$ be a vector space, and let $d:V\times V\to\R$ be a function. The function $d$ is called a \emph{metric}, \emph{semimetric}, \emph{premetric} or \emph{quasimetric}, depending on which conditions of the following list is satisfied:
\begin{description}
	\item[(I)] Identity of indiscernibles: $\forall x:d(x,x)=0$.
	\item[(P)] Positivity: $\forall x\neq y:d(x,y)>0$.
	\item[(S)] Symmetry: $\forall x,y:d(x,y)=d(y,x)$.
	\item[(T)] Triangle inequality $d(x,y)\leq d(x,z)+d(z,y)$.
\end{description}

Table \ref{tbl:metric-definitions} specifies a ``yes'' if a condition is required for the respective term, and a ``no'' if the condition is not demanded (although it may hold). 

\begin{table}[h!]
	\caption{Definition of a metric and generalizations thereof}\label{tbl:metric-definitions}
	\begin{tabular}{|l|l|l|l|l|p{3.2cm}|}
		\hline 
		 & \emph{premetric} & \emph{metric} & \emph{quasimetric} & \emph{semimetric} & \emph{$\varepsilon$-semimetric}\tabularnewline
		\hline 
		\hline 
		(I) & yes & yes & yes & yes & yes\tabularnewline
		\hline 
		(P) & yes & yes & yes & yes & yes\tabularnewline
		\hline 
		(S) & no & yes & no & yes & yes\tabularnewline
		\hline 
		(T) & no & yes & yes & no & satisfied up to an additive $\varepsilon$-error on the right\tabularnewline
		\hline 
	\end{tabular}
\end{table}

For any given $\eps>0$, we call $d$ an \emph{$\eps$-semimetric}, if it satisfies the triangle inequality up to an additive error $\leq\eps$ on the right-hand side, i.e., an $\eps$-semimetric, satisfies (I), (P), (S) and 
\begin{equation}\label{eqn:epsilon-triangle-inequality}
	\forall x,y,z\in V: d(x,y)\leq  d(x,z)+d(z,y)+\epsilon.
\end{equation} 
\end{definition}
By our definition, a $0$-semimetric is a metric, or otherwise saying that if the $\eps$-semimetric $d$ is actually ``zero-semi'', then it is a metric.

\subsection{The Problem}

Our goal is to endow the space $\R^{\ell}$ with a metric $d$, such that a set $Y=\set{\vec y_1,\ldots,\vec y_m}$ of points comes to lie in $\R^{\ell}$ at pairwise distances $d(\vec y_i,\vec y_j)=\delta_{ij}$ for all $\vec y_i,\vec y_j\in Y$, where we can choose the distance value $\delta_{ij}$ freely and in advance (but consistent to be a reasonable distance, i.e., non-negative and symmetric), i.e., we require that
\begin{enumerate}
	\item $d$ is a topological metric,
	\item $\delta_{ij}=\delta_{ji}>0$ for all $i\neq j$, but not further constrained.
\end{enumerate}

It is easy to see that finding such a $d$ will not generally be possible, since once the locations of the data points are fixed, their Euclidean distances are determined, and any other norm-induced metric would necessarily be bounded within constant multiples of these distances. However, by scaling down to local neighborhoods, we can accomplish pairwise closeness/distance relations, in the sense of, for example, putting a point $\vec y_i$ closer to $\vec y_j$ than to $\vec y_k$, although the Euclidean distances would induce different neighborship. 

Specifically, we cannot hope that our metric $d$ will satisfy exactly $d(\vec y_i,\vec y_j)=\delta_{ij}$, but we can design it to satisfy almost this equation, up to a multiplicative constant that depends on the set $Y$, i.e., we can accomplish $d(\vec y_i,\vec y_j)=\alpha\cdot\delta_{ij}$ for all $i,j$, and some value $\alpha>0$ that depends only on $Y$ (as a whole), but does not change for different $i,j$'s.

This is enough for a manipulation of a clustering: if the algorithm needs to choose a class label based on ``closeness'', the situation that $\vec y_i$ is closer to $\vec y_j$ than it is to $\vec y_k$ is reflected by the distance relation $\delta_{ij}<\delta_{ik}$. This inequality remains intact if we multiply by $\alpha>0$, so the ``closest'' cluster to $\vec y_i$ remains unchanged if we manage to put it at a distance that is proportional (not necessarily equal) to what we desire. This will be Theorem \ref{thm:embedding-any-number-of-points}.

%TODO (consider for the introduction/story): why keep the clustering or similarity metric secret? This can make sense in applications to security, if we anticipate attacks that seek to escape detection by clustering or AI methods. With a public metric, things become transparent, but an attacker can craft the attack to escape make its activity fall into an ``unsuspicious'' group (for an intrusion detection). If we keep the similarity metric secret, such an attack is much less likely to be successful.

\section{Results}

%Our approach is along the same lines, yet (i) more flexible in allowing not all dissimilarities to be specified, and (ii) assuring that the distances we construct are in fact mathematical metrics, induced by topological norms. General dissimilarity structures do not satisfy the properties of metrics, unless additional hypotheses are imposed (TODO:CITATION).

%In practice, we would choose distances to satisfy the desired proximities between data points for the purpose of making the clustering result as we need it. Formally, let us therefore assume that the pairwise distances $\delta_{ij}$ have been chosen such that (i) points that shall be far distant are assigned large values $\delta_{ij}$, and (ii) points that shall be close to one another have received small values $\delta_{ij}$. 

Given the set $Y\subset\R^{\ell}$ of points and having chosen the distances $\delta_{ij}$ at which we want $\vec y_i$ and $\vec y_j$ be separated for all $i<j$, we now proceed by showing how we can accomplish this as long as $m=\abs{Y}\in O(\sqrt{\ell})$ holds. Thinking about these points as a priori defined cluster centers, such as in algorithms like $k$-Means, we can place the cluster centers wherever we like and at any distance that we desire, which will cause the points in topological vicinity to be assigned to the nearest cluster center, just as we like it. 

\subsection{Embedding Points at desired Distances}\label{sec:embedding points at}
%Let, in the point cloud $Y\subset\R^{\ell}$ each point carry an existing class label $\ell(\vec y)$ for $\vec y\in Y$, possibly resulting from some prior clustering algorithm that was invoked on the points in $Y$. The number of class labels that we need may be defined a priori, or be determined by the clustering algorithm itself (such as, for example, DBSCAN can do). In the following, we will synonymously identify class labels with cluster centers. Some algorithms, e.g., $k$-Means, allow (even require) a specification of cluster centers, while others define clusters only by overall closeness of all points in a subset. We can unify both views by introducing ``virtual cluster centers'' as points in whose proximity we want our given data points to be. These virtual cluster centers are not usually part of the data points to undergo the actual clustering or classification, but will serve us to construct a metric that makes the given points lie within a neighborhood of a virtual cluster center, and hence also in close proximity to each other.

%TODO: wir schreiben x_k anstelle von x_ij, weil wir bei der Konstruktion unten alle x_ij "außer" dem einen Paar meinen!

We start with a simple technical result that asserts a set of points drawn at random is almost surely linearly independent. The crucial point is that all these distributions are absolutely continuous, which we can use to prove its full rank. 
\begin{lemma}\label{lem:almost-surely-rank-1}
	Let $\vec x_1,\ldots,\vec x_n\in \R^n$ be independently, but not necessarily identically sampled from distributions that are all absolutely continuous w.r.t. the Lebesgue measure on $\R^n$. Put the vectors as columns into an $(n\times n)$-matrix $\vec M$.
	
	Then, $\vec M$ has almost surely full rank, i.e., $\Pr(\rank(\vec M)=n)=1$.
\end{lemma}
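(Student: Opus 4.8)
The plan is to prove the statement by induction on the number of columns, showing that at each step the newly added random vector almost surely avoids the lower-dimensional subspace spanned by the previous columns. The key observation driving the argument is that a proper linear subspace of $\R^n$ has Lebesgue measure zero, and since each $\vec x_i$ is sampled from a distribution absolutely continuous with respect to Lebesgue measure, the probability that $\vec x_i$ lands in any fixed measure-zero set is zero. The independence of the samples lets me condition on the previous vectors cleanly.

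\begin{proof}[Proof sketch]
First I would set up the inductive framework. For $k=1,\ldots,n$, let $A_k$ denote the event that $\vec x_1,\ldots,\vec x_k$ are linearly independent. I claim $\Pr(A_n)=1$, which is equivalent to $\rank(\vec M)=n$ almost surely. The base case $A_1$ holds almost surely, since $\vec x_1 = \vec 0$ only on a Lebesgue-null set, and $\Pr(\vec x_1=\vec 0)=0$ by absolute continuity.

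For the inductive step, suppose $\Pr(A_{k})=1$ for some $k<n$. I would condition on a fixed realization of $\vec x_1,\ldots,\vec x_k$ lying in $A_k$, so that $U=\span\set{\vec x_1,\ldots,\vec x_k}$ is a $k$-dimensional subspace with $k<n$, hence a proper subspace of $\R^n$. Any proper linear subspace of $\R^n$ is contained in a hyperplane and therefore has Lebesgue measure zero. Because $\vec x_{k+1}$ is independent of $\vec x_1,\ldots,\vec x_k$ and is drawn from a distribution absolutely continuous w.r.t. Lebesgue measure, the conditional probability $\Pr(\vec x_{k+1}\in U \mid \vec x_1,\ldots,\vec x_k)=0$. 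Consequently $\vec x_{k+1}\notin U$ almost surely, which together with $A_k$ gives linear independence of $\vec x_1,\ldots,\vec x_{k+1}$; integrating over the conditioning establishes $\Pr(A_{k+1})=1$. Iterating to $k+1=n$ yields $\Pr(\rank(\vec M)=n)=1$.

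The step I expect to require the most care is the measure-theoretic justification of the conditioning: formally asserting that $\Pr(\vec x_{k+1}\in U)=0$ when $U$ itself is random (depending on $\vec x_1,\ldots,\vec x_k$). The clean way around this is to invoke independence via a product-measure (Fubini) argument rather than naive conditioning on a null event, writing the total probability as an integral over realizations of the first $k$ vectors of the inner probability $\Pr(\vec x_{k+1}\in U)$, each of which is zero because $U$ is a fixed proper subspace once the first $k$ vectors are fixed. The only subtlety is that the distributions need not be identical, but absolute continuity is assumed for every $\vec x_i$ individually, so the null-set argument applies uniformly and identical distribution is never needed.
\end{proof}
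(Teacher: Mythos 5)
Your proof is correct and follows essentially the same route as the paper's own argument: induction on the number of columns, with the new vector almost surely avoiding the lower-dimensional span of its predecessors because that span is a Lebesgue-null set and the sampling distribution is absolutely continuous. Your explicit treatment of the conditioning via a Fubini/product-measure argument is in fact more careful than the paper's version, which leaves that step implicit.
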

\begin{proof} We give the proof for uniform distributions, but remark that the argument only depends on the continuity of the distribution (w.r.t. the Lebesgue measure), but this more general claim is not required in the following.
	
We induct on the number $k$ of columns of the matrix $\vec M$: the case $k=1$ is trivial, so let us assume that up to $k<n$ columns $\vec x_1,\ldots,\vec x_k\in\R^n$ have been sampled linearly independent as $\vec x_1\sim F_1, \ldots, \vec x_k\sim F_k$. The new vector $\vec x_{k+1}\in\R^n$, to be a linear combination of $\vec x_1,\ldots,\vec x_k$, would need to lie in the $k$-dimensional subspace of $\R^n$. This subspace has dimension $k<n$ and is therefore a nullset w.r.t. the Lebesgue measure on $\R^n$, and since we sample from absolutely continuous distributions on $\R^n$ w.r.t. the $n$-dimensional (Lebesgue) measure, we have $\Pr_{F_{k+1}}(\vec x_{k+1}\in\text{span}\set{\vec x_1,\ldots,\vec x_k})=0$ too under the distribution $F_{k+1}$, thus completing the induction step.
\end{proof} 
Armed with Lemma \ref{lem:almost-surely-rank-1} as the tool to assure that we can select linearly independent vectors in $\R^\ell$ with high probability, we can state our first main result.

\begin{theorem}\label{thm:relative-anzahl-absoluter-abstaende}
Let $\ell>1$ and choose any set $D=\set{\vec x_1,\ldots,\vec x_n}\subset \R^{\ell}$ of arbitrary but linearly independent points. To each $\vec x_k$, associate a value $\delta_k>0$. Then, there is a norm $\norm{\cdot}_q$ that satisfies $\norm{\vec x_k}_q=\delta_k$ for all $k=1,2,\ldots,n$.
\end{theorem}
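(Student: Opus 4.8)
The plan is to realize the desired norm as one induced by a symmetric positive definite matrix, and to reduce the $n$ distance constraints to a system that decouples completely after a change of basis. Recall that a positive definite quadratic form on $\R^\ell$ can be written as $q(\vec x)=\vec x^\top \vec A\,\vec x$ for a symmetric positive definite matrix $\vec A$, and then $\norm{\vec x}_q=\sqrt{\vec x^\top \vec A\,\vec x}$. The requirement $\norm{\vec x_k}_q=\delta_k$ is therefore equivalent to $\vec x_k^\top \vec A\,\vec x_k=\delta_k^2$ for each $k$, so it suffices to exhibit a single symmetric positive definite $\vec A$ meeting these $n$ equations simultaneously.

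First I would exploit the hypothesis of linear independence: since $\vec x_1,\ldots,\vec x_n$ are linearly independent in $\R^\ell$, they extend to a basis $\vec x_1,\ldots,\vec x_\ell$ of $\R^\ell$. Collecting these as the columns of an invertible matrix $\vec P$ yields $\vec P^{-1}\vec x_k=\vec e_k$, the $k$-th standard basis vector, for every $k$. Next I would define $\vec A$ by prescribing its action in the transformed coordinates: set $\vec\Lambda=\text{diag}(\delta_1^2,\ldots,\delta_n^2,c_{n+1},\ldots,c_\ell)$ with arbitrary positive constants $c_j>0$ for $j>n$, and let $\vec A=(\vec P^{-1})^\top\vec\Lambda\,\vec P^{-1}$.

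With this definition the verification is immediate. The matrix $\vec A$ is symmetric, and for any $\vec v\neq\vec 0$ we have $\vec v^\top\vec A\,\vec v=(\vec P^{-1}\vec v)^\top\vec\Lambda(\vec P^{-1}\vec v)>0$, because $\vec P^{-1}$ is invertible and $\vec\Lambda$ has strictly positive diagonal; hence $\vec A$ is positive definite and $q(\vec x):=\vec x^\top\vec A\,\vec x$ defines a genuine norm $\norm{\cdot}_q$. The constraints then follow from $\vec x_k^\top\vec A\,\vec x_k=(\vec P^{-1}\vec x_k)^\top\vec\Lambda(\vec P^{-1}\vec x_k)=\vec e_k^\top\vec\Lambda\,\vec e_k=\delta_k^2$, so $\norm{\vec x_k}_q=\delta_k$ exactly, as required.

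There is no real analytic obstacle here; the one point to get right is the observation that passing to the basis in which the $\vec x_k$ become coordinate vectors turns the coupled quadratic constraints into the trivial requirement that the diagonal of $\vec\Lambda$ take the prescribed positive values, and that congruence by the invertible $\vec P^{-1}$ preserves positive definiteness automatically. I therefore expect the substance to be conceptual rather than computational: the freedom in choosing a quadratic-form norm is exactly the freedom in choosing an inner product, which can be specified independently on any basis, and this is precisely why linear independence of the $\vec x_k$ is both necessary and sufficient for the construction. (This is also where the role of Lemma \ref{lem:almost-surely-rank-1} surfaces for the later, randomized constructions, although here linear independence is assumed at the outset.)
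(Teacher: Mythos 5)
Your proof is correct, and it reaches the same conclusion by a genuinely different construction than the paper's. You extend $\vec x_1,\ldots,\vec x_n$ to a basis, collect it into an invertible $\vec P$, and set $\vec A=(\vec P^{-1})^\top\vec\Lambda\,\vec P^{-1}$ with $\vec\Lambda=\mathrm{diag}(\delta_1^2,\ldots,\delta_n^2,c_{n+1},\ldots,c_\ell)$, so the constraints decouple in the new coordinates and positive definiteness is automatic because congruence by an invertible matrix preserves it. The paper instead builds one positive semi-definite form per constraint: for each $k$ it takes (via SVD) a matrix $\vec B_k$ whose null space is $\span\set{\vec x_j: j\neq k}$, sets $\tilde{\vec A}_k=\vec B_k^\top\vec B_k$, rescales so that $\vec x_k^\top\vec A_k\vec x_k=\delta_k^2$ while $q_k(\vec x_j)=0$ for $j\neq k$, and sums $\vec A=\sum_k\vec A_k$ in the style of Lagrange interpolation. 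The two are close relatives --- in terms of the dual basis vectors $\vec f_k$ (the rows of $\vec P^{-1}$), your $\vec A$ equals $\sum_{k\leq n}\delta_k^2\,\vec f_k\vec f_k^\top+\sum_{j>n}c_j\,\vec f_j\vec f_j^\top$, i.e., rank-one interpolants plus a complement term --- but they buy different things. Your route makes the property that $\norm{\cdot}_q$ is actually a norm (positive definiteness of $\vec A$) completely transparent; in the paper this point is left implicit, since each $\vec A_k$ is only semi-definite and one must still observe that $\bigcap_k\span\set{\vec x_j: j\neq k}=\set{\vec 0}$, which the paper never checks explicitly. Conversely, the paper's per-constraint formulation maps directly onto the SVD-based Octave implementation used in its experiments. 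One side remark of yours deserves a caveat: linear independence is necessary only if one insists on realizing \emph{arbitrary} prescribed values $\delta_k$; for particular compatible values (e.g., $\vec x_2=2\vec x_1$ with $\delta_2=2\delta_1$) a norm exists even for dependent points, so ``necessary and sufficient'' should be read as a statement about the universally quantified problem, not about each fixed instance.
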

\begin{proof}
Since $\abs{D}<\infty$, we can enumerate the elements of $D$ as $D=\set{\vec x_1,\ldots,\vec x_n}$ in any fixed order (e.g., lexicographic or other). We show that we can define a quadratic form $q:\R^{\ell}\to\R$ that takes any desired value $\delta_k^2$ at point $\vec x_k$ for all $k=1,2,\ldots,n$.%=\binom{\abs{D}}2$. So, let a set of values $\delta_1,\ldots,\delta_n$ be fixed hereafter, with $\delta_k$ being the desired distance between the $\vec y_i, \vec y_j$ corresponding to $\vec x_k=\vec y_i-\vec y_j$ (i.e., $k\simeq ij$) and $k$ runs from $1,2,\ldots,n$.

For the moment, take $k$ as fixed but arbitrary, and let the matrix $\vec B_k$ be such that its null-space is the subspace of $\R^{\ell}$ spanned by $\set{\vec x_j: j\neq k}$ (such a matrix is easy to find using singular value decomposition). Define $\tilde{\vec A}_k := \vec B_k^\top\vec B_k$, then $\tilde{\vec A}_k$ is positive semi-definite. Moreover, since all $\vec x_j$ with $j\neq k$ are linearly independent of $\vec x_k$ by hypothesis, we have $\vec x_k$ falling outside the null-space of $\vec B_k$, so that $\vec x_k^\top\tilde{\vec A}_k\vec x_k>0$, and $\vec x_j^\top\tilde{\vec A}_k\vec x_j=0$ for all $j\neq k$. Finally, we will use the scaled matrix $\vec A_k:=\frac{\delta_k^2}{\vec x_k^\top\tilde{\vec A}_k\vec x_k}\cdot\tilde{\vec A}_k$ and define the quadratic form $q_k(\vec x) := \vec x^\top\vec A_k\vec x$, which now satisfies
\[
q_k(\vec x) = \begin{cases}
	\delta_k^2 & \text{ for }\vec x=\vec x_k;\\
	0 & \text{ for }\vec x=\vec x_j, j\neq k.
\end{cases}
\]

Repeating this construction for all $k=1,2,\ldots,n$, we can define a ``Lagrange-interpolation-like'' quadratic form with matrix $\vec A:=\sum_{k=1}^n \vec A_k$, which is $q(\vec x) := q_1(\vec x)+q_2(\vec x)+\ldots q_n(\vec x)$. From this, we can define the norm $\norm{\vec x}_q := \sqrt{q(\vec x)}$.%, and its induced metric $d_q(\vec x,\vec y):=\norm{\vec x-\vec y}_q$. 

By construction, this norm achieves the desired, since for every $\vec x_k$, we get  $\sqrt{q(\vec x_k)}=\sqrt{\delta_k^2}=\delta_k$.
\end{proof}

We seek to apply Theorem \ref{thm:relative-anzahl-absoluter-abstaende} to points $\vec x_k$ that arise as pairwise differences between a set of given data points $\vec y_1,\ldots, \vec y_m$ whose classification we want to forge. However, we cannot directly define $\vec x_k:=\vec y_i-\vec y_j$ as a sequence of pairwise differences, since these will (unless $m=2$) never be linearly independet. Moreover, the number of pairwise differences between $m$ data points in total is $\binom m 2$, which needs to be $\leq \ell$ necessarily, as any larger set cannot be linearly independent within $\R^\ell$. This puts a limit of at most 
\begin{equation}\label{eqn:embeddable-points}
	m\leq \frac 1 2(1+\sqrt{8\ell+1})
\end{equation}
data points that we can work with. Even under this limit, we will still need to add random distortions to the points for linear independence, but Lemma \ref{lem:almost-surely-rank-1} tells that the setting $\vec x_k := \vec y_i-\vec y_j+\eps_{ij}$ for random noise $\eps_{ij}$ (with absolutely continuous probability distribution) will almost surely give the desired independence, for all $1\leq i,j\leq m$ and hence $k=1,2,\ldots,\binom m 2$. Under this setting, we arrive \emph{almost} at a metric, since $\norm{\vec x_k}_q=\norm{\vec y_i-\vec y_j+\eps_{ij}}_q$ differs from a metric only in the additive $\eps_{ij}$-error term. Using the triangle inequality, we see that $\norm{\vec y_i-\vec y_j+\eps_{ij}}_q\leq \norm{\vec y_i-\vec y_j}_q+\norm{\eps_{ij}}_q=d_q(\vec y_i,\vec y_j)+\norm{\eps_{ij}}_q$, where $d_q$ is the metric that the norm $\norm{\cdot}_q$ canonically induces. The additive error is what comes in new, hence pointing already out that we may need to allow some limited violation of the triangle inequality (motivating our introduction of an $\eps$-semimetric). We will make this intuition rigorous in the next section.
%By hypothesis, we have $m$ bounded as 
%\begin{equation}\label{eqn:embeddable-points}
%	m\leq \frac{1+\sqrt{8 \ell +1}}2,
%\end{equation} 
%of points $\vec x_1,\ldots,\vec x_m\in\R^{\ell}$ be given, and let us interpret them as pairwise difference vectors whose length we wish to compute from some metric that we will construct. To see that the linear independence assumption is satisfiable even if the vectors are c, note that the set $D$ has cardinality $\binom{m}{2}$, and $m$ should be such that the number of pairs is less than $\ell$ (for otherwise, we would necessarily have linearly dependent vectors in $D$). The bound \eqref{eqn:embeddable-points} is exactly the largest solution of $m(m-1)/2\leq \ell$. Then, we can pick any $m$ points in $\R^{\ell}$, and, in case that the set $D$ contains two or more linearly dependent entries, add a uniformly distributed random noise to each point to accomplish, by Lemma \ref{lem:almost-surely-rank-1}, linear independence with probability 1. Since the probability is positive, suitable points exist, and we can make a fixed selection.

Informally, for a maximum number of points as given by \eqref{eqn:embeddable-points} and with random noise added, Theorem \ref{thm:relative-anzahl-absoluter-abstaende} supports the following statement at least: 
\begin{quote}
	Inside $\R^{\ell}$, we can fix any set of $O(\sqrt \ell)$ many points almost anywhere (up to random displacement for linear independence of the differences), and define a metric that puts these points into a distances that we can choose irrespectively of the location of the points.
\end{quote}

Theorem \ref{thm:relative-anzahl-absoluter-abstaende} thus lets us put a ``relative'' number of points (relative to the dimension of the space) at absolute (chosen arbitrarily) distances. We now show how to twist this around into the possibility of placing an absolute (arbitrarily chosen) number of points such that we can still find a norm that gives us distances relative, in the sense of proportional, to the pre-defined distances.

%To this end, observe that the crucial point of the above construction is the linear independence of the pairwise differences, which is impossible if the dimension is too low. 

\subsection{Dropping the constraint on $m$}\label{sec:generalized-construction}
%The upper bound on how many points, e.g., cluster centers, we can define relative to the dimension $\ell$ is limiting our possibilities for manipulating a clustering in the sense that not all given data points can be put into individual clusters, since their assignment will depend on the neighborhoods induced by the constructed quadratic form. 
To escape the dimensionality bound that limits the number of points whose pairwise distance vectors we can have as linearly independent, we transfer the construction into a larger dimensional space, and will map each point to a higher-dimensional pendant that serves to compute the distance to all neighbors. The construction is made explicit in the proof of the following result:

%Overall, we can summarize the following result, whose constructive proof has just been concluded above:
\begin{theorem}\label{thm:embedding-any-number-of-points}
Let a set of points $\vec y_1,\ldots,\vec y_m\in\R^{\ell}$ with $m>1$ and $\ell>1$ be given and let $h=\binom{m}{2}$. To each pair of distinct points $\vec y_i,\vec y_j$, assign a positive number $\delta_{ij}$, subject to the constraint that $\delta_{ij}=\delta_{ji}>0$ for $i\neq j$. Fix an $\eps>0$ such that $B_{2,\eps}(\vec y_i)\cap B_{2,\eps}(\vec y_j)=\emptyset$ for all $i\neq j$.

Then, with probability 1, we can endow $\R^h$ with a norm $\norm{\cdot}_Q$ with the following properties: 
\begin{enumerate}
	\item[(a)] for every triple $i,j,k$, there are points $\vec z_i,\vec z_j,\vec z_k$ in a neighborhood
	
	$\norm{\vec y_i-\vec z_i}_2<\eps, \norm{\vec y_j-\vec z_j}_2<\eps$ and $\norm{\vec y_k-\vec z_k}_2<\eps$, such that the distance relation $\delta_{ij}\lesseqqgtr \delta_{ik}$ holds if and only if $\norm{\vec z_i-\vec z_j}_Q\lesseqqgtr\norm{\vec z_i-\vec z_k}_Q$. The symbol $\lesseqqgtr$ will practically become an explicit $<$, $=$ or $>$, depending on the left- and right-sides.
	\item[(b)] $B_{Q,\eps}(\vec y):=\set{\vec z\in\R^h~:~\norm{\vec y-\vec z}_Q<\eps}\subseteq B_{2,\eps}(\vec y)$ for all $\vec y\in\set{\vec y_1,\ldots,\vec y_m}$, i.e., disjoint Euclidean neighborhoods remain disjoint in the topology that the norm $\norm{\cdot}_Q$ induces.
\end{enumerate}
\end{theorem}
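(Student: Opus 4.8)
The plan is to transport the whole problem into $\R^h$, where the extra dimensions buy enough room to make all $\binom m2$ pairwise difference vectors linearly independent, and then to invoke Theorem~\ref{thm:relative-anzahl-absoluter-abstaende} to manufacture a norm realizing the prescribed distances on exactly these vectors. Concretely, I would regard each $\vec y_i$ as a point of $\R^h$ (via a fixed linear embedding of its ambient space), and for every pair $k\simeq ij$ with $i<j$ draw a random displacement $\vec\eps_k$ from an absolutely continuous distribution supported on the Euclidean ball of radius $<\eps$, forming the perturbed difference $\vec x_k:=(\vec y_i-\vec y_j)+\vec\eps_k\in\R^h$ (I write $\vec x_{ij}$ and $\vec\eps_{ij}$ for $\vec x_k,\vec\eps_k$ when $k\simeq ij$). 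Since each $\vec x_k$ is an absolutely continuous shift of a fixed vector, Lemma~\ref{lem:almost-surely-rank-1} applies verbatim and guarantees that $\vec x_1,\ldots,\vec x_h$ are, with probability $1$, linearly independent in $\R^h$. This is precisely the step that escapes the dimension bound, because we never ask the \emph{true} differences $\vec y_i-\vec y_j$ (which span only an $(m-1)$-dimensional space) to be independent.

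Next I would feed the basis $\vec x_1,\ldots,\vec x_h$ together with the targets $\delta_k=\delta_{ij}$ into Theorem~\ref{thm:relative-anzahl-absoluter-abstaende}, obtaining a quadratic form with matrix $\vec A=\sum_k\vec A_k$ satisfying $\norm{\vec x_k}_Q=\delta_{ij}$ for every $k$. One small gap to close is that $\norm{\cdot}_Q$ must be a genuine norm: since each summand $\vec A_k$ is positive semidefinite with null space $\span\set{\vec x_j:j\neq k}$, a vector annihilated by all of them lies in $\bigcap_k\span\set{\vec x_j:j\neq k}=\set{\vec 0}$ (as the $\vec x_k$ form a basis), so $\vec A$ is positive definite and $\norm{\vec x}_Q:=\sqrt{\vec x^\top\vec A\vec x}$ is indeed a norm.

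For part~(a) the key observation is that we may choose the three representatives separately for each triple $i,j,k$, so I would anchor the shared index by setting $\vec z_i:=\vec y_i$ and then place $\vec z_j:=\vec y_i-\vec x_{ij}=\vec y_j-\vec\eps_{ij}$ and $\vec z_k:=\vec y_i-\vec x_{ik}=\vec y_k-\vec\eps_{ik}$ (orienting each pair so that the relevant difference is $\pm\vec x_k$, which is harmless since both $\norm{\cdot}_Q$ and $\delta$ are symmetric under negation). Because $\norm{\vec\eps_{ij}}_2<\eps$ and $\norm{\vec\eps_{ik}}_2<\eps$, all three representatives sit in the required Euclidean $\eps$-neighborhoods, while by construction $\norm{\vec z_i-\vec z_j}_Q=\norm{\vec x_{ij}}_Q=\delta_{ij}$ and $\norm{\vec z_i-\vec z_k}_Q=\delta_{ik}$. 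Hence $\delta_{ij}\lesseqqgtr\delta_{ik}$ transfers verbatim to $\norm{\vec z_i-\vec z_j}_Q\lesseqqgtr\norm{\vec z_i-\vec z_k}_Q$, which is exactly~(a).

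Finally, for part~(b) I would exploit the one remaining degree of freedom, the overall scale of the form: replacing $\vec A$ by $\mu\vec A$ multiplies $\norm{\cdot}_Q$ and every realized distance by $\sqrt\mu$, leaving all order relations — and hence~(a) — untouched. Choosing $\mu\geq 1/\lambda_{\min}(\vec A)$ makes $\norm{\vec v}_Q\geq\norm{\vec v}_2$ for all $\vec v\in\R^h$, whence $\norm{\vec y-\vec z}_Q<\eps$ forces $\norm{\vec y-\vec z}_2<\eps$, giving $B_{Q,\eps}(\vec y)\subseteq B_{2,\eps}(\vec y)$ as claimed. I expect the main obstacle to be the bookkeeping that ties the pieces together at once: the displacements $\vec\eps_k$ must be small enough to keep every representative inside its $\eps$-ball, yet generic enough (absolutely continuous on $\R^h$) to secure the linear independence on which Theorem~\ref{thm:relative-anzahl-absoluter-abstaende} rests. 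The conceptual heart, however, is the realization that \emph{letting the representatives depend on the triple} is what allows a single norm to reconcile all $\binom m2$ pairwise constraints despite their inevitable linear dependence.
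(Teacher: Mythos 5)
Your proposal is correct and follows the same skeleton as the paper's own proof: canonic embedding into $\R^h$, random perturbation to obtain $h=\binom{m}{2}$ linearly independent difference vectors via Lemma \ref{lem:almost-surely-rank-1}, the interpolation construction of Theorem \ref{thm:relative-anzahl-absoluter-abstaende} to realize the values $\delta_{ij}$ on those vectors, and a final eigenvalue rescaling. However, three of your execution choices differ from the paper, and all three are improvements rather than deviations. First, the scaling direction: the paper scales by $\alpha=1/\lambda_{\max}(\vec A)$, which gives $\norm{\cdot}_Q\leq\norm{\cdot}_2$ and hence $B_{2,\eps}(\vec y)\subseteq B_{Q,\eps}(\vec y)$ --- the \emph{reverse} of the inclusion claimed in part (b); indeed the paper's closing sentence asserts exactly that reversed containment. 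Your choice $\mu\geq 1/\lambda_{\min}(\vec A)$ yields $\norm{\cdot}_Q\geq\norm{\cdot}_2$ and therefore $B_{Q,\eps}(\vec y)\subseteq B_{2,\eps}(\vec y)$, which is what (b) actually states and what the ``disjoint neighborhoods stay disjoint'' reading requires; and since rescaling multiplies every realized distance by the same $\sqrt{\mu}$, part (a) is untouched. Second, your anchoring $\vec z_i:=\vec y_i$, with the noise absorbed entirely into $\vec z_j$ and $\vec z_k$, produces a genuinely \emph{single} representative for the shared index of each triple; the paper instead samples a fresh neighbor of $\vec y_i$ for each pair, so the ``$\vec z_i$'' appearing in $\norm{\vec z_i-\vec z_j}_Q$ and in $\norm{\vec z_i-\vec z_k}_Q$ are really two different points ($\vec z_{i,j}$ versus $\vec z_{i,k}$), a mismatch with the statement of (a) that the paper papers over by ``dropping the double index.'' Third, you actually verify positive definiteness of $\vec A=\sum_k\vec A_k$ (any null vector would lie in $\bigcap_k\span\set{\vec x_j: j\neq k}=\set{\vec 0}$ because the $\vec x_k$ form a basis), a fact the paper uses ($\alpha>0$, and that $\norm{\cdot}_Q$ is a norm at all) but never proves. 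In short: same route, but your write-up repairs precisely the two spots where the paper's own argument fails to match its stated claim, and closes one unproved assertion.
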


The neighborhood of each $\vec z_i$ will contain the given data point $\vec y_i$ (also $\vec y_i'$), and be in the \emph{desired} distance \emph{relation} to all other points by part (a) of the above theorem. This is all we need for a clustering, based on proximity relations, to come up arbitrarily as we wish.

\begin{proof}[Proof of Theorem \ref{thm:embedding-any-number-of-points}]
We will place a ``virtual neighbor'' near each data point $\vec y_1,\ldots,\vec y_m$, such that the distances between the virtual neighbors of each data points are freely specifiable. To bypass the $O(\sqrt \ell)$-bound on $m$ given by \eqref{eqn:embeddable-points}, we will respectively increase the dimension towards $h=\binom m 2$ to have a sufficient number of pairs embeddable with linearly independent differences, and in close proximity to the given data points $\vec y_1,\ldots,\vec y_m$. 

Before materializing this intuition, let us be clear on the symbols to be used hereafter, to ease following the argument. We will let:
\begin{itemize}
	\item $\vec y_i\in\R^{\ell}$ be a given data point from $Y$,
	\item $\vec y_i'\in\R^h$ be the canonic embedding of $\vec y_i\in\R^{\ell}$ inside $\R^h$.
	\item $\vec z_{i,s}\in\R^h$ be a neighbor of $\vec y_i'\in\R^h$ that is assigned a designed distance $\delta_{sr}$ to a corresponding neighbor $\vec z_{j,r}$ of the data point $\vec y_j'\in\R^h$
\end{itemize}

Let us first put these given points into the higher-dimensional space $\R^h$ as $\vec y_i'=(\vec y_i^\top\;\vec 0_{(h-\ell)\times 1})^\top\in\R^h$. Then, fix some positive $\eps<\frac 1 2\min_{i,j}\norm{\vec y_i'-\vec y_j'}_2$, and to each point $\vec y_i'$, associate a set of $m-1$ random points $\vec z_{i,1},\ldots,\vec z_{i,m-1}$ within an Euclidean distance $<\eps$ towards $\vec y_i$. These points will serve to measure the pairwise distances between $\vec y_i$ and $\vec y_j$ using higher-dimensional ``substitutes''. Note that we cannot just use $\vec y_i', \vec y_j'$ directly, since the pairwise differences between them could not give us linearly independent vectors as we require. To see this, recall that we have $m$ such vectors $\vec y_1',\ldots,\vec y_m'$, and computing pairwise differences is a linear operation. Hence, the result cannot be linearly independent vectors. For that to happen, we need points that are ``independent'' of $\vec y_i,\vec y_j$ to give linearly independent differences, but still close enough to $\vec y_i,\vec y_j$ to ``represent'' their spatial location. Therefore, we sample from an $\eps$-neighborhood that is small enough to ensure disjointness between any two points $\vec y_i', \vec y_j'$, and sample a fresh random point near $\vec y_i$ whenever we need a distance vector between $\vec y_i$ and one of the other points. Figure \ref{fig:random-neighbors} shows this.

\begin{figure}
	\centering
	\includegraphics[scale=0.7]{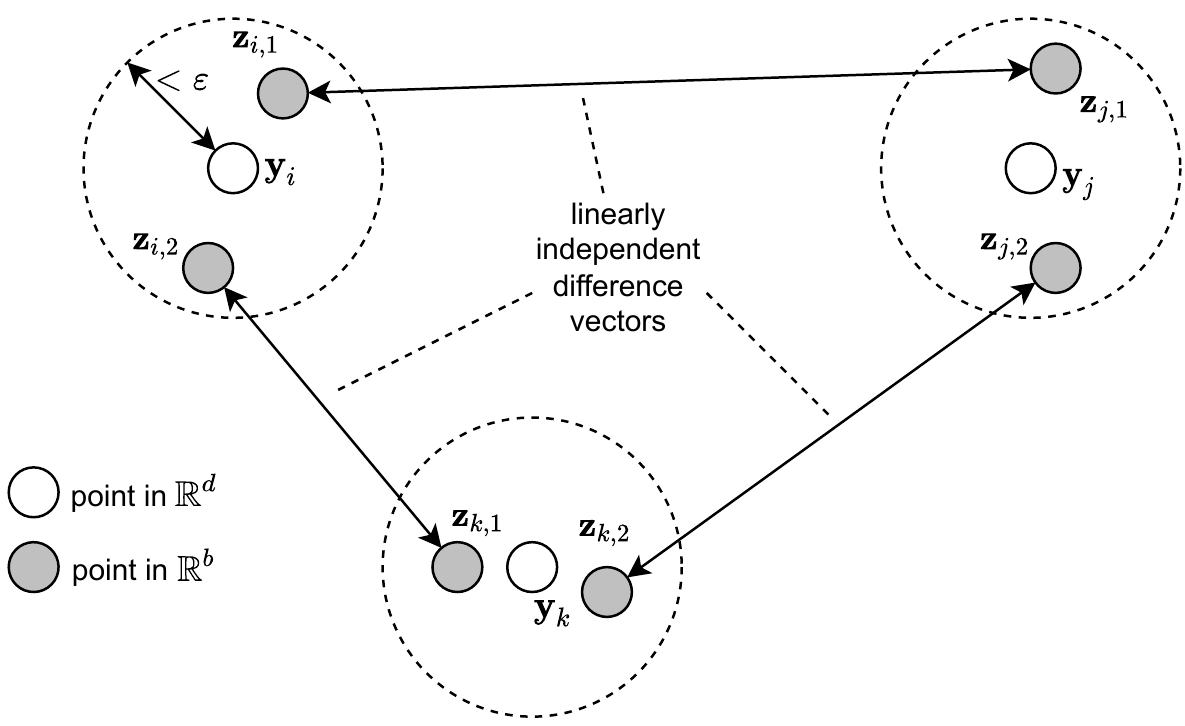}
	\caption{Linear independence by (stochastically) independent neighbor choices}\label{fig:random-neighbors}
\end{figure}

To see why this establishes the required linear independence, let us systematically iterate through the pairwise differences to see why linear independence is accomplished by the construction as described. 

Start with $\vec y_1\in\R^{\ell}$, respectively $\vec y_1'\in\R^h$: for all other points $\vec y_j'$ for $j=2,\ldots,m$, we sample a fresh neighbor $\vec z_{i,j}\in \mathcal{U}(B_{2,\eps}(\vec y_i'))$ and $\vec z_{j,1}\in \mathcal{U}(B_{2,\eps}(\vec y_j'))$ uniformly at random. We then compute the distance vector $\vec x_k=\vec z_{i,j}-\vec z_{j,1}$ (with the index $k\simeq ij$). This procedure is likewise repeated for $i=2,\ldots,m$, giving the difference vectors $\vec x_1,\vec x_2,\ldots,\vec x_{\binom m 2}\in\R^h$, all collected in a matrix
\begin{equation}\label{eqn:extended-matrix}
	\vec M=\left(\vec x_1, \ldots, \vec x_h\right)\in\R^{h\times h}.
\end{equation}

\begin{claim}\label{lem:almost-surely-linearly-independent-neighbors}
	Let $\vec y_1,\ldots,\vec y_m$ be given, 
	The matrix $\vec M$ constructed from pairwise differences of uniformly random points $\vec x_i\sim\mathcal{U}(B_{2,\eps}(\vec y_i))$ within neighborhoods centered at the points $\vec y_1',\ldots,\vec y_m'$, has almost surely full rank, i.e., $\Pr(\rank(\vec M)=h)=1$.
\end{claim}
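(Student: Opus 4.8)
The plan is to reduce the claim directly to Lemma \ref{lem:almost-surely-rank-1}, whose hypothesis demands exactly two things of the columns $\vec x_1,\ldots,\vec x_h$ of $\vec M$: that they be mutually (stochastically) independent, and that each be distributed according to a law that is absolutely continuous with respect to the Lebesgue measure on $\R^h$. Since $\vec M$ is an $(h\times h)$ matrix by \eqref{eqn:extended-matrix}, once both conditions are verified, Lemma \ref{lem:almost-surely-rank-1} yields $\Pr(\rank(\vec M)=h)=1$ immediately, as $h$ is the maximal possible rank. So the work lies entirely in checking the two hypotheses.

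First I would establish independence of the columns. The construction associates to each $\vec y_i'$ a set of $m-1$ neighbors $\vec z_{i,1},\ldots,\vec z_{i,m-1}$, and forms the difference vector for the pair $ij$ from one neighbor of $\vec y_i'$ and one neighbor of $\vec y_j'$. Each of the $m(m-1)=2h$ neighbor points is sampled independently, and each enters exactly one of the $h$ difference vectors. Hence the $h$ columns $\vec x_k$ are measurable functions of pairwise disjoint, mutually independent families of random variables, and are therefore themselves mutually independent.

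Next I would verify absolute continuity. The key observation is that although the embedded points $\vec y_i'$ all lie in the $\ell$-dimensional coordinate subspace of $\R^h$, the neighborhoods $B_{2,\eps}(\vec y_i')$ are full-dimensional open balls in $\R^h$, so each uniform law $\mathcal{U}(B_{2,\eps}(\vec y_i'))$ is absolutely continuous w.r.t. the Lebesgue measure on $\R^h$. Every column is a difference $\vec x_k=\vec z_{i,\cdot}-\vec z_{j,\cdot}$ of two independent such samples, and the difference of an absolutely continuous random vector with an independent one stays absolutely continuous: for any Lebesgue-null set $N$, conditioning on the second summand gives $\Pr(\vec x_k\in N)=\int \Pr(\vec z_{i,\cdot}\in N+\vec w)\,dP(\vec w)=0$, since $N+\vec w$ is null for each $\vec w$ and $\vec z_{i,\cdot}$ is absolutely continuous. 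Thus each $\vec x_k$ has an absolutely continuous law on $\R^h$.

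With both hypotheses in place, applying Lemma \ref{lem:almost-surely-rank-1} to the columns of $\vec M$ closes the argument. The step I expect to be the genuine crux — and the only place where the details of the construction really matter — is the independence claim: it relies crucially on a \emph{fresh} random neighbor being drawn for every pairwise difference, so that no single random point is reused across two columns. Were a neighbor shared between two columns, those columns would be dependent and the hypothesis of Lemma \ref{lem:almost-surely-rank-1} would no longer apply verbatim, though full rank could likely still be recovered by a direct induction mirroring that lemma's proof.
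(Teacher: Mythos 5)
Your proposal is correct and takes essentially the same route as the paper: the paper's own proof is likewise a one-line reduction to Lemma \ref{lem:almost-surely-rank-1}, applied to the pairwise difference vectors. If anything, your verification of the lemma's two hypotheses is more careful than the paper's — the paper justifies absolute continuity coordinate-wise (``trapezoidal'' marginals, which by itself would not imply absolute continuity of the joint law), whereas your convolution argument for the joint distribution, and your explicit check that each freshly drawn neighbor enters exactly one column (so the columns are genuinely independent), supply rigorously the details the paper leaves implicit.
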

\begin{proof}[Proof of Claim \ref{lem:almost-surely-linearly-independent-neighbors}]
	This is a mere application of Lemma \ref{lem:almost-surely-rank-1} with the distribution for $\vec x_i$ being $\mathcal{U}(B_{2,\eps}(\vec y_i))$. Since these are all absolutely continuous, the coordinates of each $\vec x_k$ are differences of independent uniformly distributed random variables, and as such have a trapezoidal shaped distribution, which is again absolutely continuous, and Lemma \ref{lem:almost-surely-linearly-independent-neighbors} applies.%TODO: noch formal hinschreiben; das stammt aus Mathematica
\end{proof}

%By Lemma \ref{lem:almost-surely-rank-1}, This matrix has, with probability 1, full $\rank(\vec M)=h$, since its entries ( 

Claim \ref{lem:almost-surely-linearly-independent-neighbors} assures the conditions needed for the construction in Section \ref{sec:embedding points at} are met with probability 1, and it just remains to assign the distances between the substitute points $\vec z_{i,s},\vec z_{j,t}$ (for $1\leq s,t\leq m-1$) in the neighborhood of $\vec y_i$ (resp. $\vec y_i'$) and $\vec y_j$ (resp. $\vec y_j'$) to be as we desire.

%Given the pairwise distances $\delta_{ij}$ between the substitute points $\vec z_i, \vec z_j$, we can -- as before -- construct a quadratic form $q_h$ to attain the value $q(\vec x_k)=q(\vec z_i-\vec z_j)=\delta_{ij}^2$ accordingly for all $k$. 

Now, repeating the construction from Section \ref{sec:embedding points at} on the pairwise difference vectors (that are now all linearly independent, since they were computed from distinct $z$-neighbors of the original points), we get a quadratic form $q_h$ on the higher-dimensional space $\R^h$. The resulting norm $\norm{\vec x}_q:=\sqrt{q_h(\vec x)}$ is equivalent to the Euclidean norm, and the respective constants are given by the eigenvalues of the matrix $\vec A$ within $q_h(\vec x)=\vec x^\top\vec A\vec x$, since every quadratic form satisfies
\[
\lambda_{\min}(\vec A)\cdot\norm{\vec x}_2^2\leq \vec x^\top\vec A\vec x\leq \lambda_{\max}(\vec A)\cdot\norm{\vec x}_2^2,
\]
where $\lambda_{\min}(\vec A), \lambda_{\max}(\vec A)$ are the smallest and largest eigenvalues of $\vec A$. Experimentally, one quickly discovers that there is not much control over the eigenvalues of $\vec A$ directly, and this is not surprising, since the norm induced by $q_h$ cannot diverge arbitrarily from the Euclidean distances (due to its topological equivalence). For our purposes, however, it will suffice to scale all eigenvalues by the same factor, and put
\begin{eqnarray}
	\alpha & := &\frac 1{\lambda_{\max}(\vec A)}\label{eqn:scaling-factor}\\
	\text{to define the scaled quadratic form }Q(\vec x) & := & \vec x^\top(\alpha\cdot \vec A)\vec x\label{eqn:quadratic-form-scaled}\\
	\text{whose induced norm satisfies } \norm{\vec x}_Q = \sqrt{Q(\vec x)} & \leq & \norm{\vec x}_2\label{eqn:euclid-bound}
\end{eqnarray}
since its largest eigenvalue was just scaled down to 1. 

We show that $\norm{\vec x}_Q$ induces distances that are consistent with the closeness-relations induced by the set of desired distances $\set{\delta_{ij}: 1\leq i<j\leq m}$ in the following sense: let $i,j,k$ be a triple selection from the set of given points, for which we want $\vec y_i$ to be closer to/equal/or farther away from $\vec y_j$ than $\vec y_k$. The possible relations can be that the distance from point $i$ to point $j$ is less than, equal to, or larger than the distance to the third point $k$, which we jointly express as 
\begin{equation}\label{eqn:proximity-relation}
	\delta_{ij} \lesseqqgtr \delta_{ik}.
\end{equation}

The norm $\sqrt{Q(x)}$ by construction is consistent with \eqref{eqn:proximity-relation} for all possible triples of neighboring points $\vec z_i,\vec z_j,\vec z_k$ (dropping the double index to simplify notation and to signify that these points are arbitrary ones from the list of existing), since
\begin{equation}\label{eqn:desired-distance-relation}
	\norm{\vec z_i-\vec z_j}_Q = \sqrt{\alpha}\cdot \delta_{ij} \lesseqqgtr  \sqrt{\alpha}\cdot \delta_{jk}=\norm{\vec z_i-\vec z_k}_Q,
\end{equation}
in which the $<$, $=$ or $>$ relations are in all cases preserved since $\alpha>0$, as $\vec A$ is positive definite. 

The Euclidean $\eps$-neighborhoods of each given data point $\vec y_i$ will be contained in the neighborhoods that our constructed norm induces, and the same $\norm{\cdot}_Q$-ball will contain points that do have the desired distances between them, up to the constant multiple $\alpha$. The proof is complete.
%TODO: experimentell bestätigt in generate_metric_demo3.m 
\end{proof}

Intuitively and informally, we can rephrase Theorem \ref{thm:embedding-any-number-of-points} as 
\begin{quote}
	Any number of given data points can be placed within disjoint neighborhoods of (virtual cluster center) points in a higher-dimensional space, whose distances we can freely control (up to a common proportionality factor).
\end{quote}

Roughly speaking, we can view Theorem \ref{thm:embedding-any-number-of-points} as somewhat dual to Theorem \ref{thm:relative-anzahl-absoluter-abstaende}: Theorem \ref{thm:relative-anzahl-absoluter-abstaende} lets us embed a \emph{relative} number (w.r.t. the dimension of the space) of points at desired \emph{absolute} distances. Conversely, Theorem \ref{thm:embedding-any-number-of-points} lets us embed an \emph{absolute} number (in the sense of being independent of the dimension) of points into a (larger) space, such that \emph{relative} proximities (in the sense of proportional distances in the larger space) can be accomplished. %We look into this in Section \ref{sec:eps-semimetric}.

It is, however, worth discussing whether we can accomplish the desired distances exactly without a proportionality factor.
%\begin{remark}\label{rem:dist-in-r-d}
	The points $\vec z_1,\ldots,\vec z_m$ whose differences are the $\vec x_k$'s, all live in the higher-dimensional space $\R^h$, so what does this mean about the distances between the original points $\vec y_1,\ldots,\vec y_m\in\R^{\ell}$? From \eqref{eqn:euclid-bound}, we get $\norm{\vec y_i-\vec z_i}_Q\leq\norm{\vec y_i-\vec z_i}_2<\eps$, and likewise for the distance between $\vec y_j$ and $\vec z_j$. Furthermore, the distance $\norm{\vec z_i-\vec z_j}_Q=\sqrt{\alpha}\cdot\delta_{ij}$, so that we have 
	\[
	\norm{\vec z_i-\vec z_j}_Q-2\eps\leq \norm{\vec y_i-\vec y_j}_Q\leq \norm{\vec z_i-\vec z_j}_Q+2\eps,
	\]
	as Figure \ref{fig:distances} shows.
	
	\begin{figure}[h!]
		\centering
		\includegraphics{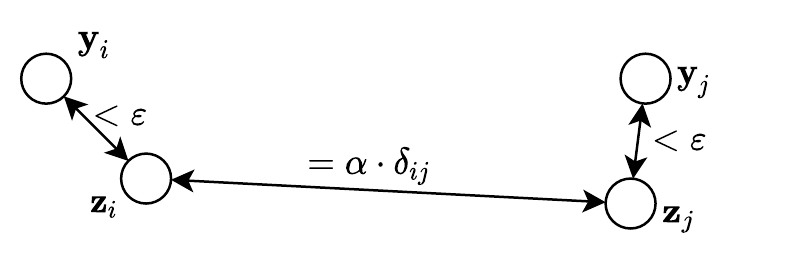}
		\caption{Distance of $\vec y_i,\vec y_j$ based on the separation of $\vec z_i$ and $\vec z_j$}\label{fig:distances}
	\end{figure}
	
	Alas, we cannot use this to claim that the distance relations between $\vec y_i, \vec y_j$ are satisfied in a likewise sense as \eqref{eqn:proximity-relation} would prescribe, because the value of $\alpha$ depends on $\eps$ through the location of the points $\vec z_i$ relative to $\vec y_i$. 
	
	Therefore, we cannot just increase the distances $\delta_{ij}$ to retain the validity of all relations \eqref{eqn:proximity-relation} even in light of the $2\eps$-error to be expected when we look at $\vec y_i,\vec y_j$.
%\end{remark}

\section{$\eps$-Semimetrics to Manipulate Distances}\label{sec:eps-semimetric}
We can carry the ideas further by avoiding the explicit work in the high-dimensional space, and instead endow the original data space with an $\eps$-semimetric that returns the desired distances. The necessity of using only an $\eps$-semimetric instead of a full metric is due to the fact that we cannot alter a norm-induced topology in $\R^{\ell}$ to an arbitrary extent, which is precluded by the equivalence of all norms on $\R^\ell$, especially any norm that we construct would be equivalent to the Euclidean norm. However, we can bypass this natural limit by a simple trick: we define a function $d:\R^{\ell}\times\R^{\ell}\to \R$ that does the following to compute the value $\d(\vec x,\vec y)$ for $\vec x,\vec y\in\R^{\ell}$:
\begin{enumerate}
	\item It takes the input points $\vec x,\vec y$ and adds noise to map them into the high-dimensional space $\R^h$. The noise will be such that it ``depends'' on both, $\vec x$ and $\vec y$, with the goal of creating linearly independent noisy $\vec z$-substitutes, which we need for the previous construction (cf. Figure \ref{fig:random-neighbors}). Let us call the resulting points $\vec z_x$ and $\vec z_y$.
	\item It then evaluates the distance $\norm{\vec z_x-\vec z_y}_Q$ inside $\R^h$, with the norm $\norm{\cdot}_Q$ as given by Theorem \ref{thm:embedding-any-number-of-points}, and returns this value as the value of $\d(\vec x,\vec y)$.
\end{enumerate}
We stress that the noise being added here is only used ``internally'' by the function $\tilde d$, but not brought onto the original data that we seek to cluster. Its purpose is merely to achieve linear independence. %Second, and equally important, observe that the noise is hereafter computed deterministically, so that we do not modify the information in the original data. 

The result obtained by materializing this plan is twofold: we can show that $\d$ is an $\eps$-semimetric, for any $\eps>0$ that we fix in advance, and second, we can let the noise even be deterministic. This is important on its own right, since it means that we do not add information to the pair $(\vec x,\vec y)$ when evaluating their distance. In other words, a clustering building upon our $\eps$-semimetric will judge $\vec x$ against $\vec y$ only on grounds of the features that these two carry, but nothing else.

We make the outline above rigorous now as the proof of the following theorem:
\begin{theorem}\label{thm:semimetric}
Let a finite set of points $Y=\set{\vec y_1,\ldots,\vec y_m}\subset\R^{\ell}$ with $\ell\geq 1$ be given and let $h=\binom{m}{2}$. To each pair of distinct points $\vec y_i,\vec y_j$, assign a positive number $\delta_{ij}\leq\norm{\vec y_i-\vec y_j}_2$, only further constrained to satisfy $\delta_{ij}=\delta_{ji}>0$ for $i\neq j$. Fix any $\eps>0$.

Then, we can construct a deterministic function $\d:\R^{\ell}\times\R^{\ell}\to\R$ that is an $\eps$-semimetric that satisfies $\d(\vec y_i,\vec y_j)=\delta_{ij}$ for all pairs $\vec y_i,\vec y_j\in\R^{\ell}$.
\end{theorem}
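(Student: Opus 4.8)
The plan is to realise $\tilde d$ exactly as the outline of Section~\ref{sec:eps-semimetric} suggests. I fix a deterministic ``noise'' map $\vec n:\R^\ell\times\R^\ell\to\R^h$ of small magnitude $\tau$ and set
\[
\tilde d(\vec x,\vec y):=\norm{\bigl(\iota(\vec x)+\vec n(\vec x,\vec y)\bigr)-\bigl(\iota(\vec y)+\vec n(\vec y,\vec x)\bigr)}_Q,
\]
where $\iota:\R^\ell\hookrightarrow\R^h$ is the canonical embedding (working in $\R^{\max(\ell,h)}$ should $h<\ell$) and $\norm{\cdot}_Q$ is the norm produced by the construction behind Theorem~\ref{thm:embedding-any-number-of-points}. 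The argument fed to the norm is antisymmetric under $\vec x\leftrightarrow\vec y$, so symmetry (S) is automatic; and the two substitutes coincide when $\vec x=\vec y$, giving identity (I) for free. The substantive work is positivity (P), the exact values, and above all the $\eps$-triangle inequality.

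First I would \emph{derandomise} Claim~\ref{lem:almost-surely-linearly-independent-neighbors}: since linearly independent neighbour differences occur with probability $1$, the ``bad'' configurations form a Lebesgue nullset, so there exists a fixed deterministic choice of neighbours $\vec z_i^{(j)}$ inside the $\tau$-balls around the embedded points whose pairwise differences $\vec x_k=\vec z_i^{(j)}-\vec z_j^{(i)}$ ($k\simeq ij$) are linearly independent. I fix one such choice and define $\vec n$ on the data pairs by $\vec n(\vec y_i,\vec y_j):=\vec z_i^{(j)}-\iota(\vec y_i)$, extending it to all of $\R^\ell\times\R^\ell$ by any bounded deterministic rule with $\norm{\vec n(\vec x,\vec y)}_2\le\tau$ that moreover \emph{vanishes as $\vec x\to\vec y$} (so that the substitutes cannot collide and (P) holds for every $\vec x\neq\vec y$). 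With the $\vec x_k$ linearly independent, the construction of Theorem~\ref{thm:relative-anzahl-absoluter-abstaende} yields a quadratic form whose norm satisfies $\norm{\vec x_k}_Q=\delta_{ij}$, and then $\tilde d(\vec y_i,\vec y_j)=\norm{\vec x_k}_Q=\delta_{ij}$ by construction, the desired exactness.

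The crux is to obtain these exact values \emph{while} keeping $\norm{\cdot}_Q\le\norm{\cdot}_2$, and this is precisely where the hypothesis $\delta_{ij}\le\norm{\vec y_i-\vec y_j}_2$ enters. In Theorem~\ref{thm:embedding-any-number-of-points} the domination $\norm{\cdot}_Q\le\norm{\cdot}_2$ was bought by the scaling $\alpha=1/\lambda_{\max}(\vec A)$, which spoils exactness (it leaves $\sqrt\alpha\,\delta_{ij}$). Here I instead need a form $\vec A'$ with $\vec x_k^\top\vec A'\vec x_k=\delta_{ij}^2$ and $\vec A'\preceq\vec I$ \emph{without} rescaling. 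Testing $\vec A'\preceq\vec I$ on $\vec x_k$ gives $\delta_{ij}^2\le\norm{\vec x_k}_2^2\approx\norm{\vec y_i-\vec y_j}_2^2$, so the stated constraint is exactly the necessary condition; I would then argue it is also sufficient to place the prescribed values on the basis $\{\vec x_k\}$ by a form dominated by the identity, exploiting the freedom in the off-diagonal entries of the form written in the $\vec x_k$-basis and taking $\tau$ small enough that $\norm{\vec x_k}_2$ stays above $\delta_{ij}$.

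The main obstacle is the $\eps$-triangle inequality \eqref{eqn:epsilon-triangle-inequality}, which I would settle by comparing $\tilde d$ against the genuine norm-induced metric $d_Q(\vec x,\vec y):=\norm{\iota(\vec x)-\iota(\vec y)}_Q$. The argument inside $\tilde d$ differs from $\iota(\vec x)-\iota(\vec y)$ only by the noise difference $\vec w(\vec x,\vec y)=\vec n(\vec x,\vec y)-\vec n(\vec y,\vec x)$, and the domination $\norm{\cdot}_Q\le\norm{\cdot}_2$ forces $\norm{\vec w(\vec x,\vec y)}_Q\le\norm{\vec w(\vec x,\vec y)}_2\le 2\tau$; hence the reverse triangle inequality gives $\abs{\tilde d(\vec x,\vec y)-d_Q(\vec x,\vec y)}\le 2\tau$ uniformly. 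Since $d_Q$ satisfies the exact triangle inequality, telescoping through any intermediate $\vec z$ yields
\[
\tilde d(\vec x,\vec y)\le \tilde d(\vec x,\vec z)+\tilde d(\vec z,\vec y)+6\tau,
\]
so fixing the noise scale $\tau\le\eps/6$ at the very start delivers the $\eps$-semimetric. I expect the delicate point to be exactly this domination $\norm{\cdot}_Q\le\norm{\cdot}_2$: it is what prevents the internally added noise from being inflated by the large eigenvalues of the interpolating form (which otherwise blow up as $\tau\to 0$), and it is compatible with the \emph{exact} targets only because of $\delta_{ij}\le\norm{\vec y_i-\vec y_j}_2$; verifying that a single form can meet both demands at once is the step I would scrutinise most carefully.
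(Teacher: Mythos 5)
Your proposal is essentially the paper's own proof of Theorem~\ref{thm:semimetric}: the paper likewise derandomizes the probability-1 neighbor selection (via Lemma~\ref{lem:almost-surely-rank-1}) into a fixed deterministic noise function $f$ bounded by $\eps/6$ as in \eqref{eqn:noise-bound}, defines $\d(\vec x,\vec y)=\norm{(\vec x,f(\vec x,\vec y))-(\vec y,f(\vec y,\vec x))}_Q$, reads off (I), (P), (S) directly (the paper keeps the noise in the appended coordinates, so positivity is automatic and your extra ``vanishing as $\vec x\to\vec y$'' rule is not needed), and proves the $\eps$-triangle inequality exactly as you do, by comparing $\d$ against the norm-induced metric $d(\vec x,\vec y)=\norm{\vec x-\vec y}_Q$ via the reverse triangle inequality, your $2\tau$/$6\tau$ bookkeeping being precisely the paper's $\eps/3$ bound \eqref{eqn:approximation} applied three times. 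The ``delicate point'' you single out --- that exact interpolation $\norm{\vec x_k}_Q=\delta_{ij}$ must coexist with the domination $\norm{\cdot}_Q\leq\norm{\cdot}_2$ that keeps the internal noise harmless --- is exactly where the paper is thinnest as well: it dispatches this in one unproven sentence (``the distances desired are all less or equal to the Euclidean distances, so we do not have problems with a scaling''), even though the hypothesis $\delta_{ij}\leq\norm{\vec y_i-\vec y_j}_2$ is only the necessary direction and sufficiency is genuinely nontrivial (nearly parallel difference vectors force large eigenvalues in the interpolating form), so your proposal reproduces the paper's argument including its one unresolved step.
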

\begin{proof}
The proof outline is essentially the two steps from above, but only made rigorous. To this end, let $\eps>0$ be given, and choose some deterministic function $f:\R^{\ell}\times\R^{\ell}\to\R$ with two properties:
\begin{eqnarray}
\abs{f}&\leq&\frac\eps 6\label{eqn:noise-bound}\\
f(\vec x,\vec y)&\neq& f(\vec y,\vec x)\text{~for all~}\vec x\neq \vec y.\label{eqn:non-symmetry}
\end{eqnarray}
The mapping of $y\in\R^{\ell}\mapsto z\in\R^h$, with $h=\binom m 2$ is done by using the function $f$ to add the ``random noise''. Note that the purpose of this random noise was, however, just to create linear independence, and we can accomplish this also on entirely deterministic ways: Lemma \ref{lem:almost-surely-rank-1} implies that each pair $(\vec y_i,\vec y_j)$ admits, with probability 1, a selection of two neighbors $\vec z_{i,j}\in B_{Q,\eps/6}(y_i)$ and $\vec z_{j,i}\in B_{Q,\eps/6}(y_j)$ such that the pairwise differences $\set{\vec z_{i,j}-\vec z_{j,i}}_{ij}$ are linearly independent. Since this selection can be done with probability 1, suitable neighbor points \emph{exist} and we can fix them as constants. The ``noise'' is then the additive term $f(\vec y_i,\vec y_j)=s\cdot\eps_{ij}$ that makes $\vec y_i'+\eps_{ij}=\vec z_{i,j}$ (likewise gives $\vec y_j'=\vec y_j+f(\vec y_i,\vec y_j)=\vec z_{j,i}$), where $\vec y_i'=(\underbrace{\vec y_i}_{\in\R^{\ell}},0,\ldots,0)\in\R^h$ and $\vec y_j'$ are the canonic embeddings of $y_i,y_j$ into $\R^h$, and with the scaling constant $s>0$ chosen to make $\abs{f(\vec y_i,\vec y_j)}=\norm{s\cdot \eps_{ij}}_Q<\eps/6$ so that the noise has the desired small magnitude. Since $f$ adds a fixed noise vector in all cases, the noise function itself is deterministic. For pairs $(\vec x,\vec y)\notin Y\times Y$, we let $f$ return a zero vector (i.e., add no noise to the points).

On the chosen neighboring points, we can define the quadratic form and induced norm that satisfies $\norm{\vec z_{i,j}-\vec z_{j,i}}_Q=\delta_{ij}$. This construction is possible, since by hypothesis, the distances desired are all less or equal to the Euclidean distances, so we do not have problems with a scaling. The $\eps$-semimetric is then defined as
\[
	\d(\vec x,\vec y) := \norm{(\vec x,f(\vec x,\vec y)) - (\vec y,f(\vec y,\vec x))}_Q.
\]
It remains to show that this is indeed an $\eps$-semimetric, so we check the corresponding properties:
\begin{enumerate}
	\item Identity of indiscernibles: since $f$ is a deterministic mapping, we get the noise $f(\vec x,\vec x)$ identically in both terms, giving $\norm{(\vec x,f(\vec x,\vec x)) - (\vec x,f(\vec x,\vec x))}_Q=0$, since we have a norm on $\R^h$. 
	\item Positivity: for $\vec x\neq \vec y$, we have $(\vec x,f(\vec x,\vec y))\neq (\vec y,f(\vec y,\vec x))$, and since $\norm{\cdot}_Q$ is a norm on $\R^h$, positivity follows directly.
	\item Symmetry: directly by construction, since 
	
	$\d(\vec x,\vec y) = \norm{(\vec x,f(\vec x,\vec y)) - (\vec y,f(\vec y,\vec x))}_Q = \norm{(\vec y,f(\vec y,\vec x)) - (\vec x,f(\vec x,\vec y))}_Q = \d(\vec y,\vec x)$.
	\item Approximate triangle inequality: for this, we first prove the following claim, saying that $\d$ and $d$ differ by at most $\frac\eps 3$ by construction: to see this, let the noise part $f(\vec x,\vec y), f(\vec y,\vec x)$ be the vectors $\bm\epsilon_1, \bm\epsilon_2$. Then (using canonic embeddings of $\vec x,\vec y\in\R^{\ell}$ inside $\R^h$), $\d(\vec x,\vec y)=\norm{(\vec x,f(\vec x,\vec y))-(\vec y,f(\vec y,\vec x))}_Q=\norm{\vec x-\vec y + (\bm\epsilon_1 - \bm\epsilon_2)}_Q$ and by the converse triangle inequality, we get
	\begin{eqnarray}
		\abs{\d(\vec x,\vec y)-\norm{\vec x-\vec y}_Q}&\leq& \norm{\d(\vec x,\vec y) - (\vec x-\vec y)}_Q\nonumber \\
		&=&\norm{(\vec x-\vec y)+(\bm\epsilon_1-\bm\epsilon_2)-(\vec x-\vec y)}_Q\nonumber\\
		&=&\norm{\bm\epsilon_1-\bm\epsilon_2}_Q\leq\norm{\bm\epsilon_1-\bm\epsilon_2}_2\leq2\frac\eps 6=\frac\eps 3\label{eqn:approximation},
	\end{eqnarray}
	using the bound $\norm{\bm\epsilon_1}_2=\norm{f(\vec x,\vec y)}_2=\abs{f(\vec x,\vec y)}\leq\eps/3$ that likewise holds for $\bm\epsilon_2$.
	Consider the metric $d(\vec x,\vec y):=\norm{\vec x-\vec y}_Q$ on $\R^{\ell}$, using the canonic embedding the points $\vec x,\vec y$ into $\R^h$. This metric satisfies the triangle inequality and is no more than $\eps/3$ different from $\d$, which will establish the claim: 
\begin{eqnarray*}
	\d(\vec x,\vec y)&\stackrel{\eqref{eqn:approximation}}\leq& d(\vec x,\vec y)+\frac\eps 3\leq d(\vec x,\vec z)+d(\vec z,\vec y)+\frac\eps 3\\
	&\stackrel{\eqref{eqn:approximation}}\leq& \d(\vec x,\vec z)+\frac\eps 3+\d(\vec z,\vec y)+\frac\eps 3+\frac\eps 3=\d(\vec x,\vec z)+\d(\vec z,\vec y)+\eps.
\end{eqnarray*}
\end{enumerate}

\end{proof}
We remark that this result, unlike Theorem \ref{thm:embedding-any-number-of-points}, is not probabilistic, as it asserts the existence of $\d$ for sure (not only with probability 1).

Our practical evaluation in the next section combined the choice of neighboring points with the definition of the noise function $f$, by seeding an available pseudorandom number generator with a value that depends non-symmetrically on both $\vec y_i,\vec y_j$ (i.e., provides different seeds for $(\vec y_i,\vec y_j)$ than for $(\vec y_j,\vec y_i)$), and which produced a pseudorandom uniform distribution in $[0,1]$. This delivered the desired pseudorandom (and hence deterministic) neighbors $\vec z_{i,j}$ and $\vec z_{j,i}$, while achieving linear independence with high probability. Scaling the pseudorandom noise down by a factor $0<s<\frac\eps{6\cdot\sqrt{h}}$ accomplished bound \eqref{eqn:noise-bound}.

It is worth remarking that $\d$ cannot in general satisfy the triangle inequality with $\eps=0$, as a counter-example situation shown in Figure \ref{fig:triangle-inequality-counterexample} shows.

\begin{figure}
	\centering
	\includegraphics[scale=0.7]{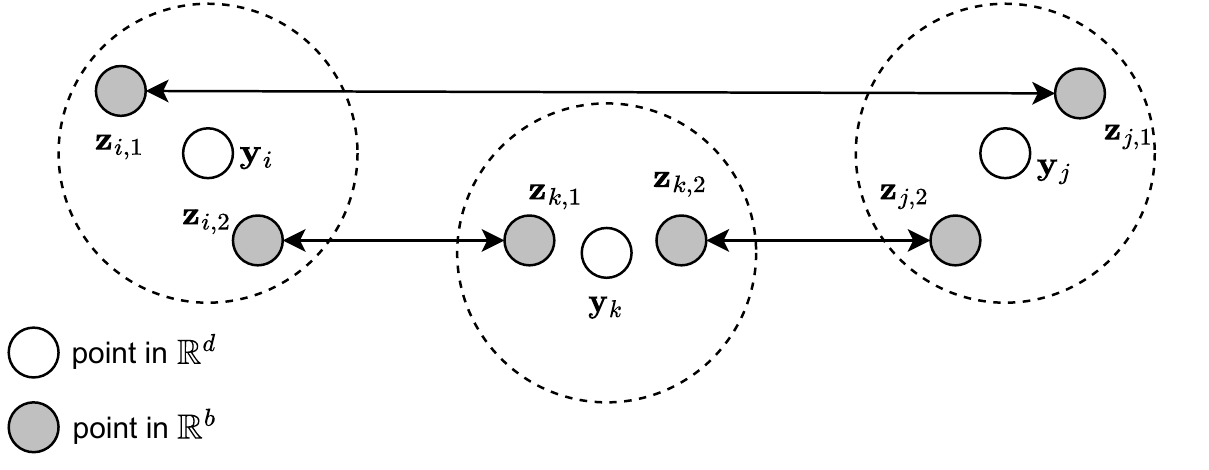}
	\caption{Violation of the triangle inequality: the triple $\vec y_i,\vec y_k,\vec y_j$ would satisfy the triangle inequality on the distances between them. However, the distances between $(\vec z_{i,2},\vec z_{k,1})$ and between $(\vec z_{k,2},\vec z_{j,2})$ add up to a value less than the direct distance from $\vec z_{i,1}$ to $\vec z_{j,1}$. Hence, the triangle inequality cannot generally hold for distances between $\vec y_i,\vec y_j,\vec y_k$ evaluated on the $\vec z$-neighbors as done by the $\eps$-semimetric $\d$, since the neighbors are determined under other constraints than this inequality. }\label{fig:triangle-inequality-counterexample}
\end{figure}

%TODO: ^d in ^\ell (für "low") und ^b durch ^h für "high" ersetzen

\section{Experimental Demonstration}

We implemented the construction of Theorem \ref{thm:embedding-any-number-of-points} in Octave \cite{octave} as a script that does the following\footnote{All scripts mentioned in this article will be made publicly available, if the paper receives positive reviews}:
\begin{itemize}
	\item We randomly drew points in $\R^\ell$ from a standard Gaussian distribution (with zero mean and unit variance), corresponding to hypothetical data points with (only) $\ell$ features, mainly for plotting. For the following experiment, we had $m=10$ and $\ell=2$, for plotting and to avoid overly lengthy tables.
	\item We iterated pairwise through the $h=\binom m 2$ distinct pairs of data points, assigning another random yet positive distance uniformly chosen as $\delta\sim\mathcal{U}[1,3]$.
	\item Then, we construct the metric $q$ as described above, and verified whether the desired distances come up as we randomly chose them. Table \ref{tbl:distances-designed} shows the scaled distances versus the values of the quadratic form $q$, thus experimentally confirming that the distances come up as we wanted them. The embedding into $\R^h=\R^{45}$ is done by adding a random noise vector with $43$ dimensions, scaled in magnitude by a factor of $\eps/\sqrt{h}$ with $\eps=0.1$, to accomplish an Euclidean distance of the $\vec z$-neighbor to the point $\vec y$ within $\leq\eps$; cf. Figure \ref{fig:random-neighbors}.
\end{itemize}

\begin{table}[h!]
	\caption{Experimental verification of proportional pairwise distances, randomly assigned pairwise between 10 points in $\R^2$.}\label{tbl:distances-designed}
	\includegraphics[width=\textwidth]{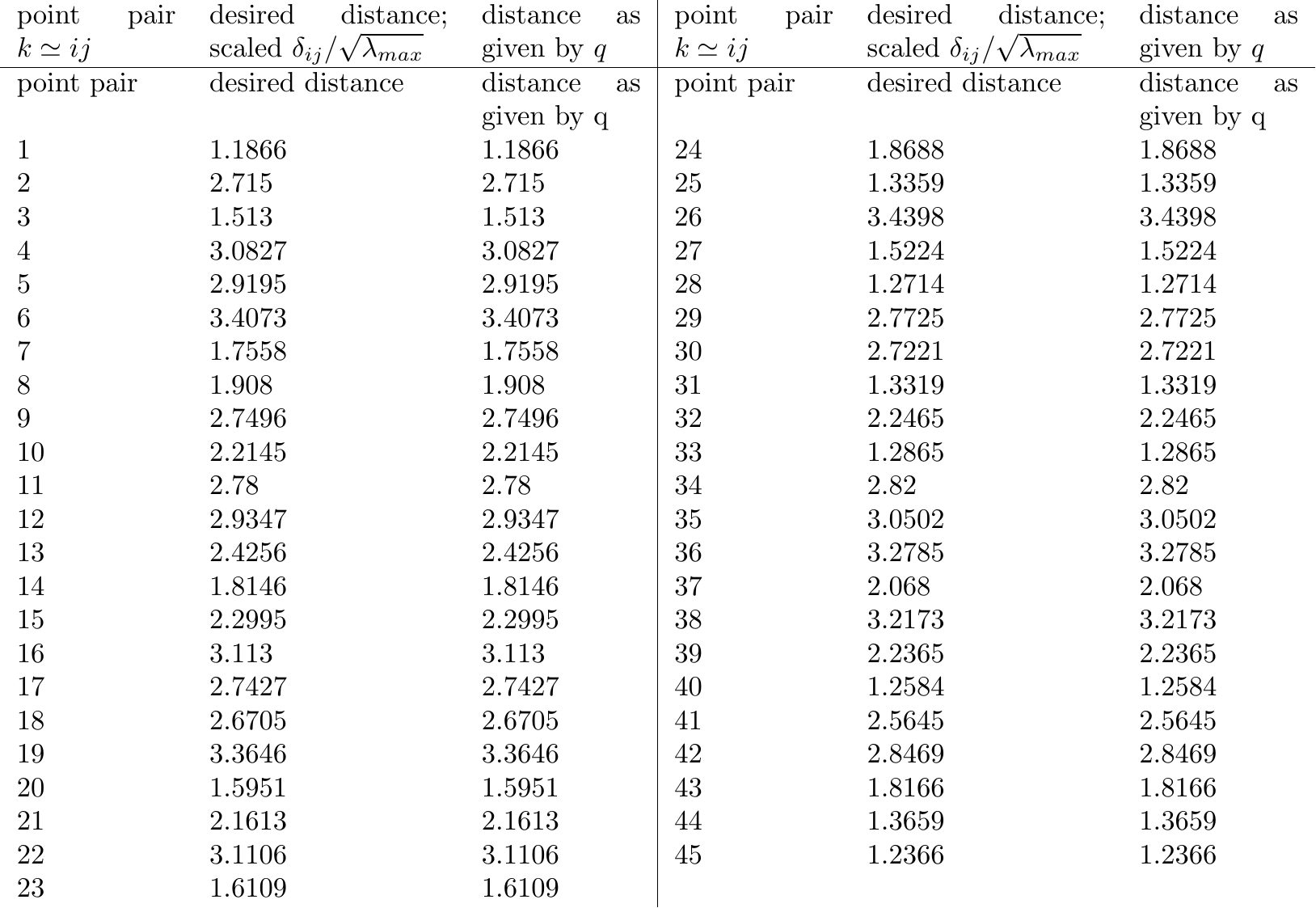}
\end{table}

\begin{remark}[On the Reproducibility of the results given here]
Conditional on the random number generator to remain as implemented in Octave as of version 7.1.0 that we used \cite{octave}, the results in this work are reproducible by seeding the random number generators with the value $12345$ or $456789$. We put the respective line of code upfront in the scripts, and recommend to include it or remove it, to either reproduce the results given here, or run one's own verification experiments.
\end{remark}

\subsection{Attacking Clustering: General Outline}\label{sec:first-experiment}
% Beschreibung des codes in "generate_metric_demo3.m"
We implemented the above construction in Octave \cite{octave}, with the following basic setup: the number $m$ and dimension $\ell$ of points can be chosen arbitrarily at the beginning. The program proceeds by computing all pairwise distances, and assigning random values $\delta_{ij}\sim \mathcal{U}[1,3]$ to the distances. It then proceeds by enlarging the dimension from $\ell$ to $h$ and padding the additional coordinates with yet more independent random values, drawn from a Gaussian distribution such that the $5\sigma$-range falls within an interval of $(-\eps,+\eps)$, with $\eps$ being set to $\eps=0.1$ at the beginning (but changeable). Lemma \ref{lem:almost-surely-rank-1} is then verified by computing the rank of the resulting matrix $\vec M$ and checking it to be maximal. 

After that, the code constructs the quadratic forms using singular value decomposition, adds up the resulting matrices and scales them down by the largest eigenvalue of the total. The final check is done by computing the distances according to the metric $Q$ as defined by \eqref{eqn:quadratic-form-scaled}, and verifying its value against the (scaled) random distances $\delta_{ij}/\lambda_{\max}$, with $\delta_{ij}$ as chosen at the beginning. The equality of the desired and computed distances are, repeatably, satisfied, thus verifying the correctness of the construction and both, Theorems \ref{thm:relative-anzahl-absoluter-abstaende} and Theorem \ref{thm:embedding-any-number-of-points}, experimentally. Enlarging the dimensions or number of points, however, quickly shows that the construction scales badly, since the dimension of the larger space grows quadratically with the number of points. Thus, the construction is practically feasible at best for a small number of points and hence clusters.

An adversarial use of Theorem \ref{thm:embedding-any-number-of-points}, yet up to experimental validation in follow-up work, to forge a clustering may then proceed along the following recipe: 
\begin{enumerate}
	\item Fix a value $\eps>0$ such that all points are pairwise more than $\eps$ apart in the Euclidean distance (the computation of $\eps$ is a closest-pair of points search problem).
	\item Inside the space $\R^h$ with $h=\binom m 2$, associate $m-1$ random points $\vec z_{i,s}\in\R^h$ for $s=1,2,\ldots,i-1,i+1,\ldots,m$ to each $\vec y_i\in\R^{\ell}\subset\R^h$ within an Euclidean distance $\norm{\vec z_{i,s}-\vec y_i}_2<\eps$, for $i=1,2,\ldots,m$.
	\item For each pair $\vec y_i,\vec y_j$ of distinct vectors (assuming $i<j$), pick a distance value $\delta_{ij}>0$, and assign this distance to a pair of representatives $\vec z_{i,a}, \vec z_{j,b}$ for some neighbor (indexed as $a$) of $\vec y_i$ and another neighbor (indexed as $b$) of $\vec y_j$. Mark these two neighbors as ``used'', and for the next distance, choose the neighbors afresh.
	\item Construct the quadratic form $q$ as prescribed in Sect. \ref{sec:embedding points at}, and by scaling with the factor $\alpha$ (eq. \eqref{eqn:scaling-factor}), define the quadratic form $Q(\vec x)$ (eq. \eqref{eqn:quadratic-form-scaled}) and induced norm $\norm{\vec x}_Q$ (eq. \eqref{eqn:euclid-bound}).
	
	A practical difficulty comes up with the dependence of $\alpha$ on both, $\eps$ and the set of chosen neighbors $\vec z$ around the given data points. That is, we cannot choose the distances among the $\vec z$-vectors independently of $\eps$, since these distances will scale by a factor that indirectly depends on $\eps$ and is outside our full control. In that sense, the parameterization of the algorithm needs care. We will come back to the parameterization in Section \ref{sec:k-means-experiment} and Section \ref{sec:dbscan}, when we describe how we chose the parameters to drive the algorithms towards the desired results.
	
	\item Run the clustering, telling the algorithms to, e.g., use some chosen points as cluster centers, assigning all points within $\eps$-distance, measured by the metric that $\norm{\cdot}_Q$ induces, to the respective cluster. By designing the distances between the cluster centers accordingly to be less than $\eps$ or (much) larger than $\eps$, one can either merge several cluster centers, or keep them separated, just as desired.

	Since this is a one-to-one association of virtual cluster centers to points, one may need parameterize the algorithms accordingly, or put the points $\vec z_{i,j}$ into close neighborhood of one another (for distinct $i$) or far away, such that the algorithm returns the desired results. We will demonstrate this for DBSCAN and $k$-Means.
\end{enumerate}

The experimental verification of Theorem \ref{thm:semimetric} is done along using its constructed $\eps$-semimetric to manipulate standard clustering algorithms. The next sections are devoted to a demonstration of this. In addition, the verification of Theorem \ref{thm:semimetric} went successful along the same lines as outlined above for Theorem \ref{thm:relative-anzahl-absoluter-abstaende} and Theorem \ref{thm:embedding-any-number-of-points}, with the additional verification of the triangle inequality to hold up to the additive $\eps$-error. This was successfully verified in another Octave script.

\subsection{Manipulating $k$-Means}\label{sec:k-means-experiment}
% Beschreibung des codes in "classification_demo_v3_kmeans.m"
%TODO: Schritte im Paper auch als Kommentare im Code reingeben zum besseren Abgleich
We tested the construction towards manipulating the well-known $k$-Means clustering. The experiment followed these steps, all implemented in an Octave script:
\begin{enumerate}
	\item Generation of a set of $n$ points $\vec y_1,\ldots,\vec y_m\in\R^{\ell}$ uniformly at random (as in Section \ref{sec:first-experiment}).
	\item Run through the list $\vec y_1,\ldots,\vec y_m$ and to each $y_i$, assign a random but fixed class $class(\vec y_i)\in\mathcal{U}(\set{1,2,3})$, \emph{irrespectively} of any features of $\vec y_i$ (i.e., independent of its location relative to other points).
	
	Figure \ref{fig:point-cloud-example} shows an example of the inputs to the algorithm as constructed in this step, already assigning the desired class (here chosen at random). To visualize the input the plot of point cloud uses different symbols to represent the desired classes. Visually this confirms that the classes, since chosen at random, have nothing to do with any geometric proximity (at least if it were the Euclidean distance).
	
	\begin{figure}[h!]
		\centering
		\includegraphics{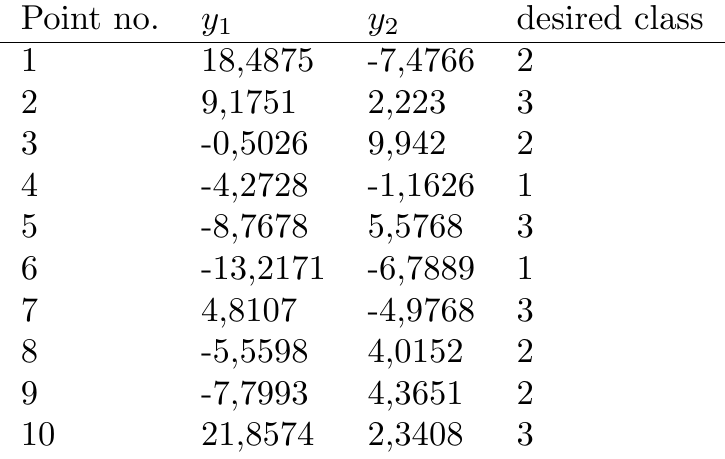}
		
		\includegraphics[width=0.6\textwidth]{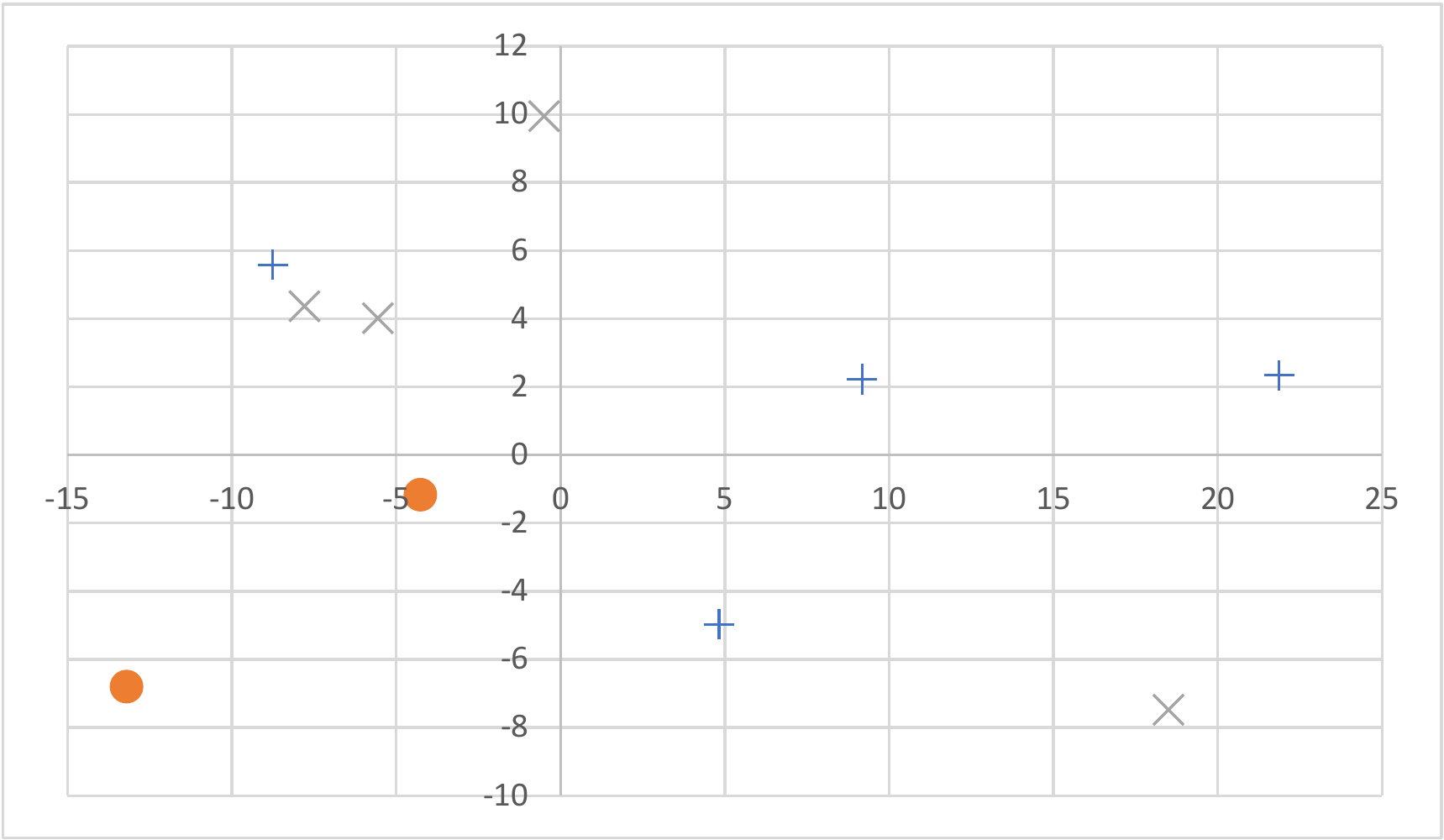}
		\caption{10 example points with spatial locations, and markers representing the different \emph{desired} classes}\label{fig:point-cloud-example}
	\end{figure}
	
	\item To assign the distances between points of the same cluster, versus distances between points in distinct clusters, we determined the closest pair of points among the set $Y$ at distance $\delta_0=\min_{i,j}\norm{\vec y_i-\vec y_j}_2$. We likewise computed the largest separation as $\delta_1 = \max_{i,j}\norm{\vec y_i-\vec y_j}_2$. Based on $\delta_0,\delta_1$, we defined the ``small distance'' to be $\delta_{small}=\frac 1{200}\cdot\delta_0$, and a ``large distance'' to be $ \delta_{large}=200\cdot\delta_1$. 
	\item Iterating over all pairs $\vec y_i,\vec y_j$ (with $i\neq j$), we assigned the desired distances between them as follows:
	\[
		\delta_{ij}=\begin{cases}
			\delta_{small} & \text{if~}class(\vec y_i)=class(\vec y_j),\\
			\delta_{large} & \text{if~}class(\vec y_i)\neq class(\vec y_j)
		\end{cases}
	\]
	Using these values $\delta_{ij}$, we constructed the norm $\norm{\cdot}_Q$ in the (large-dimensional) space $\R^h$ as described in Section \ref{sec:embedding points at} and Section \ref{sec:generalized-construction}, and defined the $\eps$-semimetric as $\d(x,y):=\norm{(\vec x, PRG(\vec x,\vec y))-(\vec y,PRG(\vec y,\vec x))}_Q$, where $PRG(seed)$ is Octave's built-in pseudorandom number generator with the seed value $seed$ (computed from both, $\vec x=(x_1,\ldots,x_\ell)$ and $\vec y=(y_1,\ldots,y_\ell$) as $seed(\vec x,\vec y)=0.9\cdot\frac{\eps}{\sqrt{h}}\cdot\left(\prod_i x_i\right)\cdot\left(\sum_j y_j\right)$. This formula was chosen \emph{arbitrarily}, and can be replaced by any (other) formula to produce a seed for $(\vec x,\vec y)$ differently to a seed produced from $(\vec y,\vec x)$, only subject to satisfy the properties \eqref{eqn:noise-bound}, \eqref{eqn:non-symmetry} that the proof of Theorem \ref{thm:semimetric} requires.
	
	The verification of the quadratic form $q$ giving the desired distances works exactly is shown in Table \ref{tbl:distances-designed}, and we refrain from repeating another table showing similar results.
\end{enumerate}

To verify our setup, we evaluated the distances inside a cluster, as well as the distances between clusters. Table \ref{tbl:cluster-separations} shows a snapshot of one (out of many runs), confirming that the distances are small inside a cluster and large between any two clusters. To avoid very small and very large numbers causing numeric issues, we used the unscaled version (i.e., without the downscaling by the largest eigenvalue as in \eqref{eqn:scaling-factor}) of the quadratic form here, which does no change to the desired relations of which points are close to one another and which are far away from each other. The experiments nicely show how the clusters are separated by a large distance at least (proportional to $\delta_{large}$), while points within the same cluster are separated by a small distance at most (proportional to $\delta_{small}$).

\begin{table}[h!]
	\centering
	\caption{Separation within and between clusters}\label{tbl:cluster-separations}
	\subfloat[Cluster centers ($=$ averages of all desired cluster points)]{\label{tbl:cluster-centers}
		\includegraphics{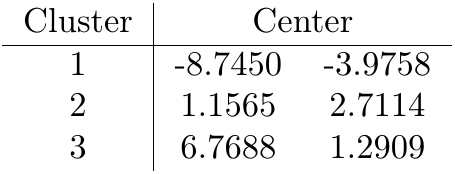}
	}

	\subfloat[Maximum distance within a cluster]{
		\begin{tabular}{lccc}
			Cluster & 1 & 2 & 3 \\ \hline
			distance within & 0.0077013 & 0.0079287 & 0.008057 \\ 
		\end{tabular}}
	
	\subfloat[Minimum distance between two clusters]{
		\begin{tabular}{l|ccc}
			Cluster & 1 & 2 & 3 \\\hline 
		1 &  & 7248.6 & 7248.6 \\ 
		2 & 7248.6 &  & 7248.6 \\ 
		3 & 7248.6 & 7248.6 &  \\ 	
		\end{tabular}
	}
\end{table}

With this setup, we ran a slightly adapted version of the $k$-Means implementation provided by \cite{ghatakDBSCAN2017}. The change we made was twofold, namely:
\begin{itemize}
	\item We adapted the function to take our $\eps$-semimetric to compute distances,
	\item and we kept the number of iterations in the algorithm fixed by not letting it terminate earlier than before reaching the maximum number of iterations.
\end{itemize}

The $k$-Means algorithm then received the following input: we computed the desired cluster centers by grouping and averaging the points $\vec y_i$ according to their cluster. The cluster centers computed (also reproducible from the table in Figure \ref{fig:point-cloud-example}), are shown in Table \ref{tbl:cluster-centers}.
%TODO: CLUSTER CENTERS

That is, the cluster center for the class $c$ is the arithmetic mean of all points $\vec y_i$ to which we assigned $class(\vec y_i)=c$, and this was done for all classes. The resulting cluster centers were then included in the data points (see Table \ref{tbl:clustering-results}, columns to the left of the vertical separator line) on which we constructed the metrics following the setup steps above, and put in front of the list of input points to the $k$-Means algorithm. This covers two versions of how $k$-Means can be applied: we can either directly supply it with the cluster centers (this is what our implementation does), and let it run on the given data points (which then include the cluster centers accordingly). Alternatively, we can refrain from supplying cluster centers, in which case $k$-Means (in the given implementation) would take the first $k$ points as cluster centers to start with if there are $k$ classes. Since we have put the cluster centers upfront in the list, the algorithm will in both cases be initialized identically. 

\paragraph{Results:} We found that the construction is numerically relatively unstable, since the eigenvalues of the matrix $\vec A$ to construct the quadratic form in some cases fell out of the numerical scope of the machine precision. Whenever the experiment completed without over- or underflows, however, the clustering of $k$-Means came up \emph{exactly as we desired}. Table \ref{tbl:clustering-results} shows the results of $k$-Means in the column ``by $k$-Means''. It is directly visible that $k$-Means produced just the random classes assigned. The experiment is repeatable with a freshly seeded random number generator each time, and confirmed the findings also over many independent repetitions. We also let the algorithm run on cluster centers that were simply chosen as arbitrary $z$-neighbors of some point within the desired cluster, but not the direct arithmetic mean of the cluster as such. This made the algorithm frequently update (shift) the cluster center, which due to the construction of our quadratic form, led to strong changes in the distances as were computed. The resulting cluster assignments were, in that case, far off what we desired them to be. The choice of the arithmetic mean of the desired points to be the cluster centers, on the contrary, will effectively let the algorithm re-compute the same cluster center over and over again, so that the distances that we designed really come up as desired. This leads to the results reported above.

\subsection{Manipulating DBSCAN}\label{sec:dbscan}
With the same setup as for $k$-Means, we let DBSCAN do a clustering for us, with the following differences to section \ref{sec:k-means-experiment}:
\begin{itemize}
	\item No inclusion of pre-computed cluster centers; the data was given to DBSCAN just as it came out of the random generators
	\item The neighborhood size $\epsilon$ and minimum number $minPts$ of points within the $\epsilon$-neighborhood was set to $\epsilon = \frac{\delta_{small}+\delta_{large}}2$ and $minPts=2$. 
\end{itemize}

\paragraph{Results:} As far as the scalability issue concerns the construction of the norm $\norm{\cdot}_Q$ on $\R^h$, this issue exists in all experiments and relates to the method in general. However, unlike $k$-Means, the experiments on DBSCAN ran without any numerical issues and always terminated, in all cases delivering an ``isomorphic'' clustering. That is, DBSCAN used a different naming (enumeration) of the classes, but there was always a one-to-one correspondence between the class $c$ that DBSCAN assigned to the point $y_i$, and the (randomly desired) class $class(y_i)$. Table \ref{tbl:clustering-results} shows the results in the column ``by DBSCAN'', where it is visible that the cluster naming is not matching the original clusters, but remains ``the same'' via the one-to-one correspondence $\set{1\mapsto 3, 2\mapsto 1, 3\mapsto 2}$, between the randomly assigned desired classes and those that DBSCAN recovered.

\begin{table}[h!]
	\caption{Experimental verification of proportional pairwise distances, randomly assigned pairwise between 10 points in $\R^2$. The right columns show what $k$-Means and DBSCAN recovered using the input data on the left, and our designed $\eps$-semimetric to measure distances.}\label{tbl:clustering-results}
	\includegraphics[scale=1]{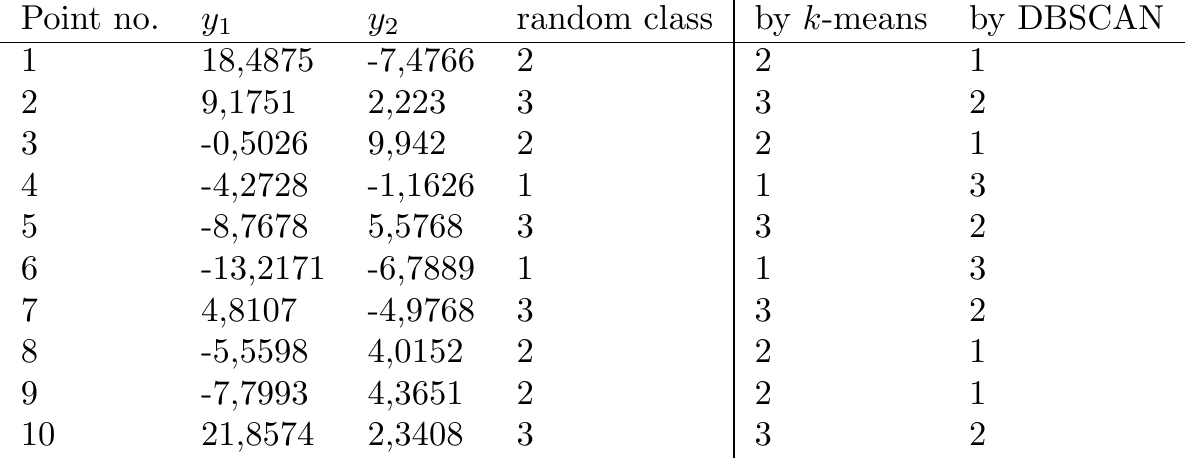}
\end{table}

\section{Related Work}
%source: ACM digital library and IEEE Xplore, using keywords like "robust clustering", "adversarial clustering", etc.
Notable in relation to our work are methods to design norms (and hence distance functions), detectors or other methods to become robust against noise, such as done in \cite{wuNoiseClusteringUsing2006,gaoLLpMetricBased2009,zhangRobustEmbeddedDeep2019,shahFrequencyCentricDefense2021,chenSuperResolutionCodingDefense2020}, or to account for, e.g., correlations in the data to define clusters \cite{beerLUCKLinearCorrelation2019}. In light of this, observe that our choice of the Euclidean norm is indeed itself arbitrary, and given the equivalence of all norms, more robust choices are compatible with our constructions. In particular, since we can allow the noise magnitude to be arbitrarily small towards imperceptibility \cite{aydinImperceptibleAdversarialExamples2021} (as long as the resulting constructions remain numerically feasible), our methods can still work with noise-robust norms instead of the Euclidean norm as we used. It amounts to a change of multiplicative constants. Other work that tackles the problem of cluster center determination (such as we require for $k$-Means for example), is as well compatible with our constructions; noting that the cluster centers can be computed by any means, including sophisticated methods as in \cite{khanClusterCenterInitialization2004,saabEstimationClusterCentroids2016}, for example. In our case, whether the clustering algorithm will work well with the user-supplied cluster centers depends on how much the cluster centers may become adapted by the algorithm. For $k$-Means, as we tested above, letting the algorithm change the cluster centers destroys the designed effects and outcomes (see the experiments above). 

A notable non-adversarial instance of the idea to replace the metrics used for clustering was also reported in \cite{daiMetricbasedGenerativeAdversarial2017}, which designed the metric for separating real from adversarial examples to train generative adversarial networks (GANs). An idea similar to our deterministic addition of noise has also been used to improve the quality of clustering, such as done in \cite{zhangDensityBasedMultiscaleAnalysis2018}, where the number of neighbors in increasing distance radii is used to map points into a high-dimensional feature space.	

Many studies of clustering \cite{deldjooSurveyAdversarialRecommender2021,daiMetricbasedGenerativeAdversarial2017,zhangDensityBasedMultiscaleAnalysis2018,xieStatisticallyRobustClusteringTechniques2022,biggioDataClusteringAdversarial2013,aydinImperceptibleAdversarialExamples2021,zhangRobustEmbeddedDeep2019} are concerned with noise robustness or poisoning attacks (e.g., \cite{liuUsingSingleStepAdversarial2021}), and demonstrate how sensitive the algorithm may become upon insertion of (even a few) adversarial examples. Our work extends these thoughts to not only poisoning the input to the algorithm, but also its configuration by replacing the distance functions a posteriori to justify some (desired) behavior. This is particularly problematic in, for example, recommender systems, as was thoroughly studied in \cite{deldjooSurveyAdversarialRecommender2021}. Attacks similar to our setting were also reported as \emph{adversarial backdoors} such as proposed by \cite{zhangAdvDoorAdversarialBackdoor2021}, but with our modification being not in the algorithm, but merely in its configuration.

An interesting possible countermeasure to our attack technique are methods of posterior cluster validation \cite{jose-garciaSurveyClusterValidity2021,krishnamoorthyNewInterCluster2013}, which may judge the ``plausibility'' of clusters on different means than our crafted metric. This is indeed conceptually close to the countermeasure that we propose as a prior commitment to the exact configuration of the clustering algorithm, including the choice of distance in particular. Such commitments would be up to independent party verification, and the cited reference offers an additional tool for such third-party verification to be done. Specifically some robust classification techniques, e.g., \cite{wangRobustClusteringSocial2013}, work with specially designed metrics that should not be silently replaceable.

Some algorithm implementations (see, e.g., \cite{mcinnesComparingPythonClustering2016}) allow to specify and supply a dissimilarity matrix, specifying the $ij$-th element as the desired distance between data point $\vec y_i$ and data point $\vec y_j$, not necessarily constraining this dissimilarity matrix to correspond to any topological metric. A famous related result is Schoenberg's criterion \cite{schoenbergRemarksMauriceFrechet1935}, which gives conditions on a matrix to correspond to Euclidean distances of points placed in some (possibly high-dimensional) space. Constructive proofs of Schoenberg's criterion take the norm as fixed (to be Euclidean), and desired distances, and from this determine the placement of points such that they are in consistent locations. Our work uses the same three items, but two different ingredients for the third item as outcome: we first fix the distances, and place the points in the space, and from there, construct a metric that puts the points into the desired distance from each other (although their locations are fixed).

\section{Discussion}
Our results illustrate a systematic threat to arise if a data processor is not open on the metrics and algorithmic details of the applied clustering algorithm. Suppose that the data processor would keep the algorithm details, including its configuration parameters, as a business secret, then the following scenario would instantly become possible: two customers, both with very similar, but not entirely identical, features show up and ask for a credit loan. A similarity-based clustering like $k$-Means or DBSCAN would be expected to classify both persons in the same way, meaning that both would or both would not get the credit loan. 

However, if the provider is malicious and decides to subjectively judge yet pretend to be objective by ``leveraging artificial intelligence'', it can craft the metric $\norm{\cdot}_Q$ in a way to put customer \#1 into the rejection and customer \#2 into the acceptance bin for the credit decision, irrespectively of the fact that they have almost the same properties. Conversely, the malicious provider can also include a desired customer with entirely different properties in either the ``accept'' or ``reject'' bin, by crafting the metric accordingly.

Upon accusations from third parties, it is then even possible to claim that the function $\d$ as constructed to discriminate the customers is indeed a metric in the mathematical sense. It only differs from a true metric by a violation of the triangle inequality up to an additive error of $\eps$. However, making $\eps$ smaller than the machine precision will make $\d$ practically indistinguishable from a real metric, at least on the computer where the clustering shall be demonstrated practically (e.g., to settle a court argument).

Although experimentally verifiable, the procedure does not scale well as the experiments with $k$-Means have shown. It is imaginable to implement the whole scheme with arbitrary precision arithmetic using suitable libraries\footnote{Wikipedia provides a comprehensive list at \url{https://en.wikipedia.org/wiki/List_of_arbitrary-precision_arithmetic_software}}. The algorithms used here were selected for their simplicity and availability as ``bare bones implementation'', void of complicated overheads for integration in larger libraries. This choice was aimed at easing matters of verification and reproduction of results in this work. Future studies may thus try to use off-the-shelf implementations of the same or more powerful clustering algorithms in other programming languages such as Python or similar. This work is intended as a mere first step. Extending the experiments to more advanced clustering algorithms than the two basic ones that we have used here, is a further direction of future work. 

\section{Conclusion}
The lesson we learned from our study and experiments is the inevitable need to enforce transparency and an a priori commitment to algorithms, parameters, metrics and all details of a clustering algorithm before applying it to data. Our ``attacks'' depend on the possibility of an a posteriori manipulation of the algorithms to justify a desired result. This possibility can include reproducing a past set of legitimate clustering decisions and may only affect a manipulation of a single or a few data points (simply set the distances accordingly to match an honest setup, and only set some values to distances of malicious desire).

The operational principles to derive from these insights are well known and standard in cryptography, known as Kerckhoffs' principle, and elsewhere also called a ``nothing-up-my-sleeve'' design choice\footnote{This term stems from the field of symmetric cryptography, where constants in algorithms are chosen in a way that allows to re-calculate the constant (e.g., use a hexadecimal representation of $\sqrt 2$ as is done in some hashing algorithms or similar).}. The idea is to explain where design choices, even if they refer to seemingly arbitrary constants, come from, and are not due to some hidden agenda of the designer (``nothing up my sleeve'').%TODO:Zitat: nothing-up-my-sleeve-numbers zitieren 

The principle is the same here: the implementer and user of the clustering algorithm should be obliged to explain all design choices and transparently and publicly commit to them before letting the algorithms become operational. If not, then a posteriori manipulations, just as we have demonstrated, become possible.
%TODO: Kerckhoffs' Prinzip
%\section{Discussion}
%TODO: plausible deniability diskutieren für Langfassung/Journal Fassung des Papers?

%TODO: 1-dimensionaler Fall?

\bibliographystyle{abbrv}
%\bibliography{main}

\end{document}